\newtheorem{problem}{Problem}
\newtheorem{lemma}{Lemma}
\newtheorem{theorem}{Theorem}
\definecolor{darkred}{rgb}{.4,0,0}
\definecolor{fluored}{rgb}{.92,0.11,0.3}
\definecolor{fluoorange}{rgb}{1,.33,0}
\definecolor{fluoblue}{rgb}{0.11,.3,0.92}
\title{Sparse Regression via Range Counting}
\author{%
Jean Cardinal\footnotemark[1]\,\,\footnotemark[2]
\and
Aur\'elien Ooms\footnotemark[3]\,\,\footnotemark[4]%
}
\begin{document}

\renewcommand{\thefootnote}{\fnsymbol{footnote}}
\maketitle
\footnotetext[1]{Universit\'e libre de Bruxelles (ULB), Brussels, Belgium. Email: {\tt jcardin@ulb.ac.be}}
\footnotetext[2]{Supported by the Fonds de la Recherche Scientifique-FNRS under CDR Grant J.0146.18.}
\footnotetext[3]{BARC, University of Copenhagen, Denmark. Email: {\tt aurelien.ooms@di.ku.dk}}
\footnotetext[4]{%
Supported by the VILLUM Foundation grant 16582.
Part of this research was accomplished while the author was a PhD
student at ULB under FRIA Grant 5203818F (FNRS).%
}
\renewcommand{\thefootnote}{\arabic{footnote}}

\begin{abstract}
The sparse regression problem, also known as best subset selection problem, can
be cast as follows: Given a set $S$ of $n$ points in $\mathbb{R}^d$, a point $y\in
\mathbb{R}^d$, and an integer $2 \leq k \leq d$, find an affine combination of at most $k$
points of $S$ that is nearest to $y$.
We describe a $O(n^{k-1} \log^{d-k+2} n)$-time randomized
$(1+\varepsilon)$-approximation algorithm for this problem with \(d\) and
\(\varepsilon\) constant.
This is the first algorithm for this problem running in time $o(n^k)$.
Its running time is similar to the query time of a data structure recently
proposed by Har-Peled, Indyk, and Mahabadi (ICALP'18), while not requiring any
preprocessing.
Up to polylogarithmic factors, it matches a conditional lower bound relying on
a conjecture about affine degeneracy testing.
In the special case where $k = d = O(1)$, we also provide a simple
$O_\delta(n^{d-1+\delta})$-time deterministic exact algorithm, for any \(\delta > 0\).
Finally, we show how to adapt the approximation algorithm for the sparse
linear regression and sparse convex regression problems with the same
running time, up to polylogarithmic factors.
\end{abstract}

\section{Introduction}

Searching for a point in a set that is the closest to a given query point is certainly
among the most fundamental problems in computational geometry.
It motivated the study of crucial concepts such as multidimensional search data structures, Voronoi diagrams, dimensionality reduction, and has immediate applications in the fields of databases and machine learning.
A natural generalization of this problem is to search not only for a single nearest neighbor,
but rather for the nearest {\color{darkred} combination} of a bounded number of points. More precisely,
given an integer $k$ and a query point $y$, we may wish to find an affine combination of $k$ points of the set
that is the nearest to $y$, among all possible such combinations.
This problem has a natural interpretation in terms of sparse approximate solutions to linear systems, and is known as the {\color{darkred} sparse regression}, or {\color{darkred} sparse approximation} problem in the statistics and machine learning literature.
Sparsity is defined here in terms of the $\ell_0$ pseudonorm $\left\lVert .\right\rVert_0$, the number of nonzero components. The {\color{darkred} sparse affine regression} problem can be cast as follows:

\begin{problem}[Sparse affine regression]%
Given a matrix $A\in\mathbb{R}^{d\times n}$, a vector $y\in\mathbb{R}^d$, and
an integer $2\leq k\leq d$, find $x\in\mathbb{R}^n$ minimizing $\left\lVert Ax - y\right\rVert_2$, and such that $\left\lVert x\right\rVert_0\leq k$, and $\sum_{i=1}^n x_i = 1$.
\end{problem}

By interpreting the columns of $A$ as a set of $n$ points in $\mathbb{R}^d$,
the problem can be reformulated in geometric terms as the {\color{darkred} nearest induced flat} problem.
\begin{problem}[Nearest induced flat]\label{prob:nif}
Given a set $S$ of $n$ points in $\mathbb{R}^d$, an additional point
$y\in\mathbb{R}^d$, and an integer $k$ such that $2\leq k\leq d$, find $k$
points of $S$ such that the distance from $y$ to their affine hull is the
smallest.
\end{problem}
Here the distance from a point to a flat is the distance to the closest point on the flat.
Note that if we allow $k=1$ in the above definition, we have the nearest neighbor problem as a special case. We consider the setting in which the dimension $d$ of the ambient space as well as the number $k$ of points in the sought combination are constant, and study the asymptotic complexity of the problem with respect to $n$. As observed recently by Har-Peled, Indyk, and Mahabadi~\cite{HIM18}, the problem is closely related to the classical {\color{darkred} affine degeneracy testing} problem, defined as follows.
\begin{problem}[Affine degeneracy testing]
Given a set $S$ of $n$ points in $\mathbb{R}^d$, decide whether there exists
$d+1$ distinct points of $S$ lying on an affine hyperplane.
\end{problem}
The latter can be cast as deciding whether a point set is in so-called {\color{darkred} general position}, as is often assumed in computational geometry problems.
In the special case $d=2$, the problem is known to be 3SUM-hard~\cite{GO95,BCILOS19}. In general, it is not known whether it can be solved in time $O(n^{d-\delta})$ for some positive $\delta$~\cite{ES95,AC05}.
Supposing it cannot, we directly obtain a conditional lower bound on the complexity of the nearest induced flat problem.
This holds even for approximation algorithms, which return an induced flat whose distance is within some bounded factor of the distance of the actual nearest flat.

\begin{lemma}[Har-Peled, Indyk, and Mahabadi~\cite{HIM18}]
If the nearest induced flat problem can be approximated within any multiplicative factor in time $O(n^{k-1-\delta})$ for some
positive $\delta$, then affine degeneracy testing can be solved in time $O(n^{d-\delta})$.
\end{lemma}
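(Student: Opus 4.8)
The plan is to reduce affine degeneracy testing to $n$ instances of approximate nearest induced flat, each with the parameter set to $k = d$. The value $k = d$ is the unique choice in the admissible range $2 \le k \le d$ for which the affine hull of $k$ points of $S$ can reach dimension $d-1$, that is, can be a hyperplane; this is what will let us detect coplanarity through a distance-zero query. Concretely, given an instance $P \subseteq \mathbb{R}^d$ of affine degeneracy testing with $|P| = n$, I would, for each point $y \in P$, call the hypothesized approximation algorithm on the nearest induced flat instance $(S, y, d)$ with $S = P \setminus \{y\}$, and report that $P$ is degenerate if and only if at least one of these $n$ calls returns a flat at distance exactly $0$ from its query point.

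Correctness rests on two elementary observations. First, a multiplicative $c$-approximation ($c \ge 1$) returns a flat whose distance to $y$ lies between the optimal distance and $c$ times the optimal distance, so the reported distance equals $0$ precisely when the optimum is $0$, i.e.\ precisely when $y$ lies in the affine hull of some $d$ distinct points of $S$; hence an arbitrarily coarse approximation already suffices and no exact solver is needed. Second, $d+1$ distinct points of $\mathbb{R}^d$ lie on a common hyperplane if and only if their homogeneous lifts $(p,1)\in\mathbb{R}^{d+1}$ are linearly dependent, equivalently if there is an affine dependence $\sum_j \lambda_j p_j = 0$ with $\sum_j \lambda_j = 0$ and not all $\lambda_j$ zero; choosing any $j$ with $\lambda_j \ne 0$ then writes $p_j$ as an affine combination of the remaining $d$ points, so $p_j$ lies in their affine hull.

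Combining the two: if $P$ is degenerate, take $d+1$ coplanar points, use the affine dependence to single out one of them, call it $y$, lying in the affine hull of the other $d$ (which belong to $S = P\setminus\{y\}$), so the call on $y$ has optimal distance $0$ and the approximation reports $0$; conversely, if the call on some $y$ reports $0$, then $y$ together with the $d$ witnessing points of $S$ form $d+1$ distinct points whose affine hull has dimension at most $d-1$, hence lie on a common hyperplane and certify that $P$ is degenerate. If one reads the problem as allowing \emph{at most} $k$ points, a witness using fewer than $d$ points can be padded with arbitrary further points of $P$ without pushing the affine dimension above $d-1$, so the conclusion is unchanged (and the trivial case $n \le d$ is answered ``non-degenerate'' outright).

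For the running time, each of the $n$ calls is on an instance of $n-1$ points with the constant parameter $k = d$, so by hypothesis it costs $O((n-1)^{k-1-\delta}) = O(n^{d-1-\delta})$; together with $O(n)$ bookkeeping this gives a total of $O(n\cdot n^{d-1-\delta}) = O(n^{d-\delta})$, as claimed. The reduction is routine apart from the first observation above — that it suffices to detect distance exactly zero, and that multiplicative approximation preserves this — which is precisely the point that makes the lower bound apply even to approximation algorithms and is where I would expect a reader to want the most care.
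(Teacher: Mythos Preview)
Your proposal is correct and follows essentially the same reduction as the paper: for each $y\in P$ run the approximate nearest induced flat algorithm on $(P\setminus\{y\},\,y,\,k=d)$ and declare degeneracy iff some call returns distance zero. You have simply fleshed out the details the paper leaves implicit (why multiplicative approximation preserves the distance-zero test, and why affine degeneracy among $d+1$ points is equivalent to one of them lying in the affine hull of the others).
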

\begin{proof}
Suppose we have an approximation algorithm for the nearest induced flat problem.
Then given an instance of affine degeneracy testing, we can go through every point $y\in S$ and run this algorithm on an instance composed of the set $S\setminus\{y \}$, the point $y$, and $k=d$. The answer to the degeneracy testing instance is positive if and only if for at least one of these instances, the distance to the approximate nearest flat is zero. The running time is $O(n^{d-\delta})$.
\end{proof}

\subsection*{Motivations and previous works}

Sparse regression is a cornerstone computational task in statistics and machine learning, and comes in a number of flavors.
It is also referred to as {\color{darkred} best subset selection} or, more generally, as {\color{darkred} feature selection} problems~\cite{N95,B16}.
In practice, it is often useful to allow for the sparsity constraint by including a penalty term in the objective function, hence writing the problem in a Lagrangian form.
If the $\ell_1$ norm is used instead of the $\ell_0$ norm, this method is known as the LASSO method~\cite{T96}, to which a tremendous amount of research has been dedicated in the past twenty years.
In the celebrated $k$-SVD algorithm for sparse dictionaries design~\cite{AEB06}, the sparse coding stage consists of a number of sparse regression steps. In this context, they are typically carried out using greedy methods such as the matching pursuit algorithm~\cite{MZ93}.
Efficient sparse regression is also at the heart of {\color{darkred} compressed sensing} techniques~\cite{CRT06,D06}.

Aiming at an exhaustive survey of the variants and applications of sparse regression is futile; instead, we refer to Hastie, Tibshirani, and Friedman~\cite{HTF09} (Chapter 3), Miller~\cite{M02}, and references therein. We also point to Bertsimas, Pauphilet, and Van Parys~\cite{BPV19} for a recent survey on practical aspects of sparse regression methods.

The computational complexity of sparse regression problems is also
well-studied~\cite{N95,DMA97,FKT15,FKK16}.
In general, when a solution $x$ is sought that minimizes the number of nonzero
components while being at bounded distance from $y$, the problem is known to be
$\mathsf{NP}$-hard~\cite{N95}.
However, the complexity of the sparse regression problem when
the sparsity constraint $k$ is taken as a fixed parameter has not been
thoroughly characterized.
In particular, no algorithm with running time $o(n^k)$ is known.

Recently, Har-Peled, Indyk, and Mahabadi~\cite{HIM18} showed how to use
approximate nearest neighbor data structures for finding approximate solutions
to the sparse affine regression problem.
They mostly consider the {\color{darkred} online} version of the problem, in which we allow
some preprocessing time, given the input point set $S$, to construct a data
structure, which is then used to answer queries with input $y$.
They also restrict to {\color{darkred} approximate} solutions, in the sense that the
returned solution has distance at most $(1+\varepsilon)$ times larger than the
true nearest neighbor distance for any fixed constant $\varepsilon$.
They show that if there exists a $(1+\varepsilon)$-approximate nearest neighbor
data structure with preprocessing time $S(n,d,\varepsilon)$ and query time
$Q(n,d,\varepsilon)$, then we can preprocess the set $S$ in time
$n^{k-1}S(n,d,\varepsilon)$ and answer regression queries in time
$n^{k-1}Q(n,d,\varepsilon)$.
Plugging in state of the art results on approximate nearest neighbor searching
in fixed dimension~\cite{AMNSW98}, we obtain a preprocessing time of $O(n^k\log
n)$ with query time $O(n^{k-1}\log n)$ for fixed constants $d$ and $\varepsilon$.

Har-Peled et al.~\cite{HIM18} also consider the {\color{darkred} sparse convex regression}
problem, in which the coefficients of the combination are not only
required to sum to one, but must also be nonnegative.
In geometric terms, this is equivalent to searching for the {\color{darkred} nearest
induced simplex}.
They describe a data structure for the sparse convex regression problem having
the same performance as in the affine case, up to a \(O(\log^k n)\) factor.
For \(k=2\),
they also give a $(2+\varepsilon)$-approximation subquadratic-time offline
algorithm. When \(d=O(1)\), the running time of this algorithm can be made
close to linear.

A closely related problem is that of searching for the nearest flat in a
set~\cite{M07,BHZ11,M15}.
This was also studied recently by Agarwal, Rubin, and Sharir~\cite{ARS17}, who
resort to polyhedral approximations of the Euclidean distance to design data
structures for finding an approximate nearest flat in a set. They prove that
given a collection of $n$ $(k-1)$-dimensional flats in $\mathbb{R}^d$, they can
construct a data structure in time $O(n^k\operatorname{polylog}(n))$ time and space that can
be used to answer $(1+\varepsilon)$-approximate nearest flat queries in time
$O(\operatorname{polylog}(n))$. They also consider the achievable space-time tradeoffs.
Clearly, such a data structure can be used for online
sparse affine regression:
We build the structure with all possible ${n\choose k}$ flats induced by the
points of $S$.
This solution has a very large space requirement and does not help in the offline version stated as Problem~\ref{prob:nif}.

In the following, we give an efficient algorithm for Problem~\ref{prob:nif},
and bridge the gap between the trivial upper bound of $O(n^k)$ and the lower
bound given by the affine degeneracy testing problem, without requiring any preprocessing.

\subsection*{Our results}

\paragraph*{Nearest induced line, flat, or hyperplane.}
We prove that the nearest induced flat problem (Problem~\ref{prob:nif}), can be
solved within a $(1+\varepsilon)$ approximation factor for constant $d$ and
$\varepsilon$ in time $O(n^{k-1} \log^{d-k+2} n)$, which matches the lower bound on
affine degeneracy testing, up to polylogarithmic factors.
This is a near-linear improvement on all previous methods.
The running time of our algorithm is also comparable to the query time of the
data structure proposed by Har-Peled et al.~\cite{HIM18}.
The two main tools that are used in our algorithms are on one
hand the approximation of the Euclidean distance by a polyhedral distance, as
is done in Agarwal, Rubin, and Sharir~\cite{ARS17}, and on the other hand a
reduction of the decision version of the
problem to orthogonal range queries.
Note that orthogonal range searching data structures are also used in
Har-Peled et al.~\cite{HIM18}, albeit in a significantly distinct fashion.

In Section~\ref{sec:lines},
as warm-up,
we focus on the special case of
Problem~\ref{prob:nif} in which $d=3$ and $k=2$.
\begin{problem}[Nearest induced line in \(\mathbb{R}^3\)]
\label{prob:nil}
Given a set $S$ of $n$ points in $\mathbb{R}^3$, and an additional point $y$, find two points $a,b\in S$ such that the distance
from $y$ to the line going through $a$ and $b$ is the smallest.
\end{problem}
Our algorithm for this special case already uses all the tools that are subsequently generalized for arbitrary values of $k$ and $d$.
The general algorithm for the nearest induced flat problem is described in Section~\ref{sec:flats}.
In Section~\ref{sec:hyper}, we consider the special case of Problem~\ref{prob:nif} in which $k=d$, which can be cast as the {\color{darkred} nearest induced hyperplane} problem.
\begin{problem}[Nearest induced hyperplane]
\label{prob:nih}
Given a set $S$ of $n$ points in $\mathbb{R}^d$, and an additional point $y$, find $d$ points of $S$ such that the distance
from $y$ to the affine hyperplane spanned by the $d$ points is the smallest.
\end{problem}
For this case, we design an {\color{darkred} exact} algorithm with running time
$O(n^{d-1+\delta})$, for any $\delta > 0$.
The solution solely relies on classical computational geometry
tools, namely point-hyperplane duality and cuttings~\cite{CF90,C93}.

Our algorithms can be adapted to perform {\color{darkred} sparse linear regression}, instead
of sparse affine regression. In the former, we drop the condition that the sum of the coefficients must be equal to one.
This is equivalent to the nearest {\color{darkred} linear} induced \(k\)-flat problem.
It can be solved in the same time as in the affine case.
To see this, realize that the problem is
similar to the nearest induced flat problem where the first vertex is always the origin.
The obtained complexity is the same as the one for the nearest induced flat problem.

\paragraph*{Nearest induced simplex.}
Adapting our algorithm to sparse {\color{darkred} convex} regression, in which $x$ is also required to be positive, is a bit more involved.
Har-Peled et al.~\cite{HIM18} augment their data structure for the nearest induced flat with orthogonal range searching data structures in (\(k+1\))-dimensional space to solve this problem with an extra \(O(\log^k n)\) factor in both the preprocessing and query time.
We show we can perform a similar modification.

The sparse convex regression problem can be cast as the problem of finding the nearest simplex induced by $k$ points of $S$.
\begin{problem}[Nearest induced simplex]
\label{prob:nis}
Given a set $S$ of $n$ points in $\mathbb{R}^d$, an additional point $y$, and an integer $k$ such that $2\leq k\leq d$, find $k$ points of $S$ such that the distance from $y$ to their convex hull is the smallest.
\end{problem}
We prove that this problem can also be approximated within a $(1+\varepsilon)$
approximation factor for constant $d$ and $\varepsilon$ in time
$O(n^{k-1} \log^d n)$, hence with an extra \(O(\log^{k-2} n)\) factor in the
running time compared to the affine case.  This is described in
Section~\ref{sec:convex}.

Our results and the corresponding sections are summarized in Table~\ref{table:results}.

\begin{table}
\centering
\caption{\label{table:results}Results.}
\begin{tabular}{| l | l | c | l |}
\hline
Problem & Section & Approximation & Running Time \\ \hline
Problem~\ref{prob:nil}: Nearest induced line in \(\mathbb{R}^3\) & \S~\ref{sec:lines} & \(1 + \varepsilon\) & \(O_{\varepsilon}(n \log^3 n) \) \\ \hline
Problem~\ref{prob:nif}: Nearest induced flat & \S~\ref{sec:flats} & \(1 + \varepsilon\) & \(O_{d,\varepsilon}(n^{k-1} \log^{d-k+2} n) \) \\ \hline
Problem~\ref{prob:nih}: Nearest induced hyperplane & \S~\ref{sec:hyper} & \(1\) & \(O_{d,\delta}(n^{d-1+\delta}),\,\,\forall \delta > 0 \) \\ \hline
Problem~\ref{prob:nis}: Nearest induced simplex & \S~\ref{sec:convex} & \(1 + \varepsilon\) & \(O_{d,\varepsilon}(n^{k-1} \log^d n) \) \\ \hline
\end{tabular}
\end{table}

\section{A $(1+\varepsilon)$-approximation algorithm for the nearest induced line problem in $\mathbb{R}^3$}
\label{sec:lines}

We first consider the nearest induced line problem (Problem~\ref{prob:nil}).
We describe a near-linear time algorithm that returns a $(1+\varepsilon)$-approximation to the nearest induced line in $\mathbb{R}^3$, that is,
a line at distance at most $(1+\varepsilon)$ times larger than the distance to the nearest line.

\begin{theorem}
\label{thm:lines}
For any positive constant $\varepsilon$, there is a randomized
$(1+\varepsilon)$-approximation algorithm for the nearest induced line
problem in $\mathbb{R}^3$ running in time $O_{\varepsilon}(n\log^3 n)$ with high
probability.
\end{theorem}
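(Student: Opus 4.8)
The plan is to reduce the optimization problem to a bounded number of decision problems via binary search, and to reduce each decision problem to orthogonal range emptiness queries after replacing the Euclidean distance by a polyhedral surrogate. Concretely, suppose we want to decide, for a given radius $r>0$, whether there exist two points $a,b\in S$ whose spanning line passes within distance $r$ of $y$. Translate so that $y$ is the origin. The line through $a$ and $b$ is within distance $r$ of the origin if and only if the infinite cylinder of radius $r$ around that line contains the origin, equivalently if the line intersects the ball $B(0,r)$. The first key step is to approximate this ball by a convex polytope $P$ with $O_\varepsilon(1)$ facets such that $B(0,r)\subseteq P\subseteq B(0,(1+\varepsilon)r)$; then ``the line $ab$ meets $P$'' is a $(1+\varepsilon)$-faithful test. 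Since $P$ has a constant number of facets, the line $ab$ meets $P$ iff for some pair of parallel-ish facet directions the line passes ``between'' the supporting halfspaces — more usefully, we use the standard fact that a line meets $P$ iff its two ``shadow'' constraints are satisfied on each of the $O_\varepsilon(1)$ coordinate-like projections induced by the normal fan of $P$.

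The second key step is to linearize. Fix one of the $n$ points, say $a\in S$, as the apex; we will iterate this over all $a$, contributing the factor $n$ in the running time. With $a$ fixed, the condition ``line through $a$ and $b$ meets the polytope $P$'' becomes, for each facet or each projection direction $u$ of the constant-size polytope, a condition of the form ``the signed quantity $\langle u, b-a\rangle / \langle \cdot \rangle$ lies in an interval'' — and after clearing denominators and using the sign of the relevant inner products, this is a constant number of linear inequalities in the coordinates of $b$ with coefficients depending only on $a$, $r$, and $u$. Thus the set of admissible $b$ is the intersection of $O_\varepsilon(1)$ halfspaces whose normals come from a fixed constant-size set of directions (independent of $a$); after applying the fixed linear change of coordinates that makes all these normals axis-parallel, the question ``does there exist $b\in S$ with line $ab$ meeting $P$?'' becomes ``is $S'$ (a fixed linear image of $S$, or rather of $S$ translated by $-a$) nonempty inside an axis-parallel box in $\mathbb{R}^3$?'' — an orthogonal range emptiness query. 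Building a range-searching data structure on the $O_\varepsilon(1)$ transformed copies of $S$ costs $O(n\log^2 n)$ preprocessing, and each of the $n$ queries costs $O(\log^2 n)$, for $O(n\log^2 n)$ per fixed radius $r$.

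The third step assembles these pieces. There are $O(n^2)$ candidate lines, and the optimal distance is one of the $O(n^2)$ values $d(y,\mathrm{aff}\{a,b\})$. Rather than sorting all of them, we use the decision procedure inside a randomized search: sample $O(1)$ pairs, or more robustly, observe that it suffices to find the optimum among a set whose size we can binary-search through by a parametric/implicit-search argument, or simply run the decision procedure at a geometric sequence of radii and then refine. A clean route is: run the decision algorithm to locate the optimal $r^\ast$ up to a factor $(1+\varepsilon)$ via binary search over an exponential grid — the number of probes is $O(\log(1/\varepsilon) + \log\Delta)$ where $\Delta$ is the spread, and a standard randomized-scaling or snapping argument removes the dependence on $\Delta$ at the cost of $O(\log n)$ extra probes with high probability (this is where randomization enters). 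Each probe costs $O(n\log^2 n)$, giving the claimed $O_\varepsilon(n\log^3 n)$ bound, and we return the pair $(a,b)$ witnessing the smallest feasible radius. The main obstacle I anticipate is the second step: verifying that ``line through the fixed point $a$ meets the constant-complexity polytope $P$'' really does linearize into axis-parallel box membership after a single $a$-independent change of coordinates — one must check that the denominators arising from the perspective-from-$a$ map have controlled sign (so that multiplying through preserves inequality direction) or, if not, split into a constant number of sign-determined sub-cases, each handled by its own box query. Handling these sign cases cleanly, and making sure the polyhedral approximation of the cylinder (not just the ball) is the right object, is the technical heart of the argument; everything else is bookkeeping and standard range-searching.
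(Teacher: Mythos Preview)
Your overall architecture matches the paper's---polyhedral surrogate, iterate over one anchor point $a$, reduce the condition on $b$ to a range query, wrap in a randomized search---but the step you yourself flagged as uncertain is a genuine gap. The set of $b$ such that $\ell(a,b)$ meets a fixed convex body is a double cone with apex $a$; its facets are the planes through $a$ and the silhouette edges of the body, and the normal of the plane through $a$ and a fixed edge $vw$ is $(v-a)\times(w-a)$, which rotates continuously with $a$. No single $a$-independent linear change of coordinates makes these cones axis-parallel, and splitting into sign cases does not help, since within each case the normals still vary with $a$. So your reduction to box queries in a fixed Euclidean coordinate system does not go through.

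The paper's fix is to abandon Euclidean coordinates for \emph{rank} coordinates. It first reduces (via ``edge-shooting'' one edge $e$ of the approximating polytope at a time) to testing whether $\ell(a,b)$ meets a fixed \emph{triangle} $\Delta=\operatorname{conv}(y,y+\lambda e)$, so the cone $C_a$ always has exactly three facets, one per edge $f$ of $\Delta$. For each such $f$, the pencil of planes containing the supporting line of $f$ induces a fixed circular order on $S$ (the order in which a rotating plane sweeps the points); although the bounding plane $H_f\ni a$ depends on $a$, the halfspace $H_f^+$ is always an \emph{interval} in this fixed order. Sorting $S$ once per edge of $\Delta$ and using the three ranks as coordinates turns ``$b\in C_a$'' into a $3$-dimensional orthogonal range query in rank space---this is the missing idea. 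The randomized search is also organized differently: rather than a grid of radii (with the attendant spread issues you noted), the paper runs a quickselect-style recursion on the $O(n^2)$ candidate distances---pick a uniformly random line meeting the current $\Delta$, shrink $\lambda$ to its value, repeat---giving $O(\log n)$ rounds with high probability and no spread dependence.
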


\paragraph*{$(1+\varepsilon)$-approximation via polyhedral distances.}
The proof uses the following result due to Dudley~\cite{D74}, that is also a
major ingredient in the design of the data structure described by Agarwal,
Rubin, and Sharir~\cite{ARS17}.
The {\color{darkred} polyhedral distance} $d_Q(y,v)$ between two points $y$ and $v$ with
respect to a polyhedron $Q$ centered on the origin is the smallest $\lambda$
such that the dilation $\lambda Q$ of $Q$ centered on $y$ contains $v$, hence
such that $v\in y+\lambda Q$.

\begin{lemma}[Dudley~\cite{D74}]
\label{lem:polyd}
For any positive integer $d$ and positive real $\varepsilon$, there exists a $d$-dimensional polyhedron $Q$ of size $O(1/\varepsilon^{(d-1)/2})$ such that for every $y,v\in \mathbb{R}^d$:
$$
\left\lVert y-v\right\rVert_2 \leq d_Q(y,v) \leq (1+\varepsilon)\cdot \left\lVert y-v\right\rVert_2.
$$
\end{lemma}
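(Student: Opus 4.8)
The plan is to reduce the claim to a purely geometric statement about approximating the Euclidean unit ball $B$ by a polytope with few facets, and then to prove that statement with the classical net-of-tangent-halfspaces construction. First I would observe that $d_Q(y,v)$ depends only on $v-y$, so $d_Q(y,v)=\gamma_Q(v-y)$, where $\gamma_Q(u)=\inf\{\lambda>0:u\in\lambda Q\}$ is the gauge (Minkowski functional) of $Q$, and likewise $\|y-v\|_2=\gamma_B(v-y)$. Gauges are positively homogeneous and reverse inclusions — $\gamma_{\alpha K}=\alpha^{-1}\gamma_K$, and $\gamma_K(u)\le 1\iff u\in K$ for a closed convex body $K$ with the origin in its interior — so the inequality $\|y-v\|_2\le d_Q(y,v)\le(1+\varepsilon)\|y-v\|_2$ for all $y,v$ is equivalent to $\gamma_B\le\gamma_Q\le\gamma_{\frac{1}{1+\varepsilon}B}$ pointwise, which is in turn equivalent to the sandwich $\frac{1}{1+\varepsilon}B\subseteq Q\subseteq B$. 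Thus it suffices, for every $\varepsilon\in(0,1)$ (the case $\varepsilon\ge 1$ being handled by a fixed simplex), to build a polytope with $O(\varepsilon^{-(d-1)/2})$ facets nested between the balls of radii $\frac{1}{1+\varepsilon}$ and $1$; note that the number of facets is the right measure of "size" here, since $d_Q(y,v)$ is then the maximum of that many linear functions of $v-y$.

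Next I would fix $\delta=\sqrt\varepsilon$ and take a $\delta$-net $N\subseteq S^{d-1}$ of the unit sphere in the Euclidean metric, i.e. every point of $S^{d-1}$ lies within distance $\delta$ of a point of $N$. A maximal $\delta$-separated subset is such a net, and a standard packing argument — the open balls of radius $\delta/2$ about the points of $N$ are pairwise disjoint and contained in the annulus of radii $1-\delta/2$ and $1+\delta/2$, so a volume comparison bounds their number — gives $|N|=O(\delta^{-(d-1)})=O(\varepsilon^{-(d-1)/2})$. Then I would set
$$
P=\bigcap_{p\in N}\{x\in\mathbb{R}^d:\langle x,p\rangle\le 1\},
$$
an intersection of halfspaces each supporting $B$ at its net point, so that $B\subseteq P$ and $P$ has at most $|N|$ facets.

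The crux is to show $P$ does not protrude far beyond $B$, namely $P\subseteq(1+\varepsilon)B$. For $x\in P$ with $r:=\|x\|>0$ and $u:=x/r\in S^{d-1}$, pick $p\in N$ with $\|u-p\|\le\delta$; since $u,p$ are unit vectors, $\langle u,p\rangle=1-\tfrac12\|u-p\|^2\ge 1-\tfrac12\delta^2$, and $x\in P$ forces $1\ge\langle x,p\rangle=r\langle u,p\rangle\ge r(1-\tfrac12\delta^2)$, so $r\le(1-\tfrac12\delta^2)^{-1}\le 1+\delta^2=1+\varepsilon$ for $\varepsilon<1$. Hence $B\subseteq P\subseteq(1+\varepsilon)B$, and rescaling $Q:=\frac{1}{1+\varepsilon}P$ (an affine map, so the facet count is unchanged) yields $\frac{1}{1+\varepsilon}B\subseteq Q\subseteq B$ with $O(\varepsilon^{-(d-1)/2})$ facets; by the first paragraph this $Q$ is exactly what the lemma asserts. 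To keep the constants clean I would actually run the construction with $\varepsilon/3$ in place of $\varepsilon$, which absorbs the $O(\delta^2)$ slack and the rescaling into a single $(1+\varepsilon)$ factor without affecting the asymptotic size.

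The main obstacle — indeed the only step that is not bookkeeping — is the curvature estimate $\langle u,p\rangle=1-\tfrac12\|u-p\|^2$ for unit vectors: it is precisely what makes a $\delta$-net produce an outward error of order $\delta^2$ rather than $\delta$, and hence a facet count of $\varepsilon^{-(d-1)/2}$ instead of $\varepsilon^{-(d-1)}$. The remaining ingredients (existence of small nets by packing, the gauge reformulation, and the final affine rescaling) are routine; the one point to stay careful about is that "size" counts facets and that scaling preserves it.
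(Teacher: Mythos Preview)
Your argument is correct and self-contained: the gauge reformulation reducing the two-sided inequality to the sandwich $\tfrac{1}{1+\varepsilon}B\subseteq Q\subseteq B$, the $\sqrt{\varepsilon}$-net of tangent halfspaces, the packing bound on the net size, and the curvature identity $\langle u,p\rangle=1-\tfrac12\|u-p\|^2$ giving the quadratic gain are all accurate and cleanly assembled. One cosmetic remark: the annulus packing argument works, but comparing $(d-1)$-dimensional cap areas on $S^{d-1}$ directly is slightly tidier and avoids the small-$\delta$ volume expansion.

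As for the comparison you were asked to make: the paper does not give its own proof of this lemma. It is stated as a known result attributed to Dudley and used as a black box in the subsequent reductions (the edge/face-shooting arguments). So there is no ``paper's proof'' to compare against; you have simply supplied a standard proof of a cited result, and it is the right one --- this is exactly Dudley's construction specialized to the Euclidean ball.
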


\begin{proof}[Proof of Theorem~\ref{thm:lines}]
We reduce the problem to a simpler counting problem in two steps.

\paragraph*{Edge-shooting.}
We use Lemma~\ref{lem:polyd} for $d=3$. We give an exact algorithm for computing
the nearest induced line with respect to a polyhedral distance $d_Q$, where
$Q$ is defined from $\varepsilon$ as in Lemma~\ref{lem:polyd}.
Given a polyhedron $Q$, one can turn it into a simplicial polyhedron by
triangulating it.
Therefore, for constant values of $\varepsilon$, this reduces the problem to a
constant number of instances of the {\color{darkred} edge-shooting problem}, defined as
follows: Given an edge $e$ of $Q$, find the smallest value $\lambda$ such that
$y+\lambda e$ intersects a line through two points of $S$.
We iterate this for all edges of $Q$, and pick the minimum value.
This is exactly the polyhedral distance from $y$ to its nearest induced line.

\paragraph*{Binary search.}

Using a randomized binary search procedure, we reduce the edge-shooting
problem to a simpler {\color{darkred} counting problem}, defined as follows: given the
triangle $\Delta$ defined as the convex hull of $y$ and $y+\lambda e$, count
how many pairs of points $a,b\in S$ are such that the line $\ell (a,b)$ through
them intersects $\Delta$.
Suppose there exists a procedure for solving this problem.
We can use this procedure to solve the edge-shooting problem as follows.

First initialize $\lambda$ to some upper bound on the distance (for instance,
initialize \(\lambda\) to the distance to the closest data point \(p \in S\):
\(\lambda = \min_{p \in S} \left\lVert p - y\right\rVert_2\)).
Then count how many lines $\ell (a,b)$ intersects $\Delta$, using the procedure.
If there is only one, then return this value of $\lambda$.
Otherwise, pick one such line uniformly at random, compute the value $\lambda$
such that this line intersects $y+\lambda e$.
Then iterate the previous steps starting with this new value of $\lambda$.
Since we picked the line at random, and since there are \(O(n^2)\) such lines
at the beginning of the search, the number of iterations of this binary search
is $O(\log n)$ with high probability.

We therefore reduced the nearest induced line problem to
$O(\varepsilon^{-1} \log n)$ instances of the counting problem.

\paragraph*{Orthogonal range counting queries.}

\begin{figure}
\begin{center}
\includegraphics[page=1,scale=.8]{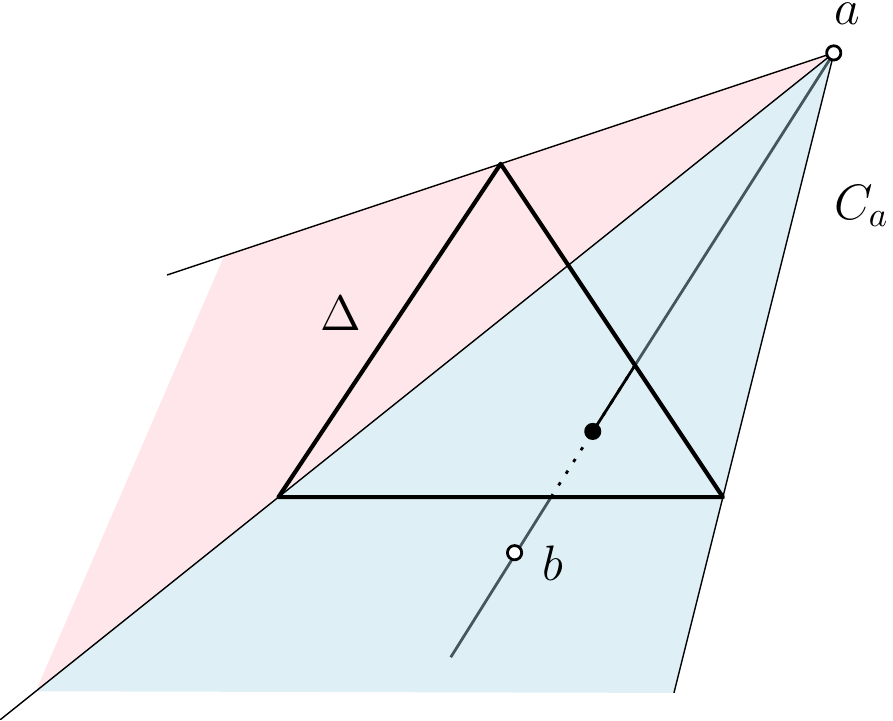}
\end{center}
\caption{\label{fig:cone}The cone $C_a$.}
\end{figure}

Data structures for {\color{darkred} orthogonal range counting queries} store a set of
points in $\mathbb{R}^g$ in such a way that the number of points in a given
$g$-rectangle (cartesian product of $g$ intervals) can be returned quickly.
Known data structures for orthogonal range counting queries in $\mathbb{R}^g$
require $O(n \log^{g-1} n)$ preprocessing time and can answer queries in
$O(\log^{g-1} n)$ time~\cite{Wil85,C88}.
Note that the actual coordinates of the points do not matter: We only need to
know the order of their projections on each axis.
We now show how to solve the counting problem using a data structure for
orthogonal range queries in $\mathbb{R}^3$.

Let us fix the triangle $\Delta$ and a point $a\in\mathbb{R}^3$, and consider
the locus of points $b\in\mathbb{R}^3$ such that the line $\ell (a,b)$
intersects $\Delta$.
This is a double simplicial cone with apex $a$ and whose boundary contains the
boundary of $\Delta$.
This double cone is bounded by three planes, one for each edge of $\Delta$.
In fact, we will only consider one of the two cones, because $\ell (a,b)$
intersects $\Delta$ if and only if either $b$ is contained in the cone of apex
$a$, or $a$ is contained in the cone of apex $b$. Let us call $C_a$ the cone of
apex $a$. This is illustrated on Figure~\ref{fig:cone}.

\begin{figure}
\begin{center}
\includegraphics[page=6,scale=.8]{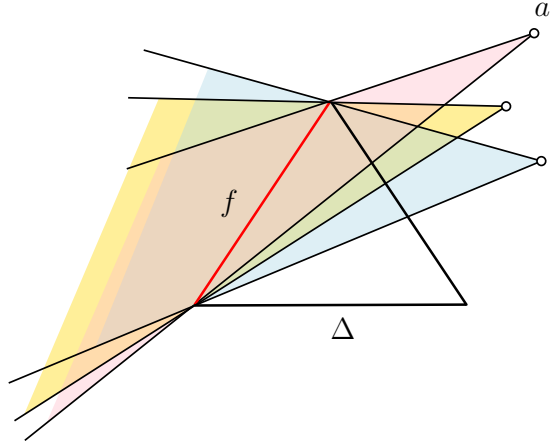}
\end{center}
\caption{\label{fig:order}%
The order of the points defined by the planes containing an edge $f$ of
$\Delta$.}
\end{figure}

Let us consider one edge $f$ of $\Delta$ and all the planes containing $f$.
These planes induce a circular order on the points of $S$, which is the order in which they are met by a plane rotating around the supporting line of $f$.
This is illustrated on Figure~\ref{fig:order}.
Now let us denote by $H_f$ the plane containing $a$ and $f$ and by $H^+_f$ the halfspace bounded by $H_f$ and containing $\Delta$.
The set of points of $S$ contained in $H^+_f$ is an interval in the circular order mentioned above.
Hence the set of points contained in $C_a$ is the intersection of three intervals in the three circular orders defined by the three edges of $\Delta$.

Now we can use an orthogonal range counting data structure for storing the points of $S$ with coordinates corresponding to their ranks in each of the three orders induced by the three edges of $\Delta$. This requires sorting them three times, in time $O(n\log n)$.
Then for each point $a\in S$, we count the number of points $b$ in the cone $C_a$ by querying the data structure.
Note that the circularity of the order can be easily handled by doubling every point.
We preprocess $S$ in time $O(n\log n)$ and answer each of the $n$ queries in time $O(\log^2 n)$.
Hence overall, this takes time $O(n\log^2 n)$.

This can be combined with the previous reductions provided we can choose a line intersecting $\Delta$ uniformly at random within that time bound.
This is achieved by first choosing $a$ with probability proportional to the number of points $b$ such that $\ell (a,b)\cap \Delta\not=\emptyset$.
Then we can pick a point $b$ uniformly at random in this set in linear time.

Combining with the previous reductions, we therefore obtain an approximation
algorithm running in time $O_{\varepsilon}(n\log^3 n)$ for the nearest induced
line problem in $\mathbb{R}^3$. This completes the proof of
Theorem~\ref{thm:lines}.
\end{proof}

\section{A $(1+\varepsilon)$-approximation algorithm for the nearest induced flat problem}
\label{sec:flats}

This section is dedicated to proving our main result in full generality.
We provide an efficient approximation algorithm for the nearest induced flat problem (Problem~\ref{prob:nif}).

We use the following notations: $\operatorname{aff}(X)$ denotes the affine hull of the set $X$ and $\operatorname{conv}(X)$ denotes its convex hull.
The set $\{\,1,2,\ldots ,n\,\}$ is denoted by $[n]$.

\begin{theorem}
\label{thm:flats}
For any constant positive real $\varepsilon >0$ and constant positive integers $d$ and
$k$, there is a randomized $(1+\varepsilon)$-approximation algorithm for the
nearest induced flat problem in $\mathbb{R}^d$ running in time
$O_{\varepsilon}(n^{k-1}\log^{d-k+2} n)$ with high probability.
\end{theorem}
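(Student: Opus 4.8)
The plan is to generalise, stage by stage, the four-part reduction behind Theorem~\ref{thm:lines}. First, apply Lemma~\ref{lem:polyd} in dimension $d$ to replace $\lVert\cdot\rVert_2$ by a polyhedral distance $d_Q$ at the cost of a $(1+\varepsilon)$ factor, where $Q$ is a simplicial polytope with $O_{d,\varepsilon}(1)$ vertices. Computing the $d_Q$-distance from $y$ to the nearest induced $(k-1)$-flat then decomposes into $O_{d,\varepsilon}(1)$ \emph{face-shooting} subproblems, one per $(d-k)$-dimensional face $\sigma$ of the triangulated $Q$: find the smallest $\lambda$ with $(y+\lambda\sigma)\cap\operatorname{aff}(T)\ne\emptyset$ for some $k$-subset $T\subseteq S$. (For $(d,k)=(3,2)$ the faces $\sigma$ are the edges of $Q$, recovering the warm-up; for $k=d$ they are vertices, i.e.\ rays.) After a generic perturbation of $Q$, the first contact of the growing body $y+\lambda Q$ with an induced flat occurs in the relative interior of such a $(d-k)$-face, and since $\sigma\subseteq Q$ the minimum over $\sigma$ of the first-contact value equals $d_Q(y,\operatorname{aff}(T))$; this costs only a constant-factor overhead.

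Next, fix $\sigma$ and reduce face-shooting to a \emph{counting} problem by randomized binary search, exactly as in Section~\ref{sec:lines}. For a value $\lambda$, set $\Delta=\operatorname{conv}(\{y\}\cup(y+\lambda\sigma))$, a $(d-k+1)$-simplex with $d-k+2$ vertices; then $\operatorname{aff}(T)$ meets $y+\mu\sigma$ for some $\mu\le\lambda$ if and only if $\operatorname{aff}(T)\cap\Delta\ne\emptyset$. Given a subroutine that counts the $k$-subsets $T$ with $\operatorname{aff}(T)\cap\Delta\ne\emptyset$ and samples one uniformly at random, initialise $\lambda$ to the trivial upper bound $\min_{p\in S}\lVert y-p\rVert_2$ (valid because every induced flat through $p$ is at most that far), and in each round replace $\lambda$ by the value realised by a uniformly random witnessing $T$. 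Since there are $O(n^k)$ candidates and a random one shrinks the count by a constant fraction in expectation, $O(\log n)$ rounds suffice with high probability, terminating when the count reaches $1$; this reduces the problem to $O_{d,\varepsilon}(\log n)$ counting instances.

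The counting step is the technical core, and where I would concentrate the effort. Projecting $\mathbb{R}^d$ along $\operatorname{aff}(T)$ onto $\mathbb{R}^{d-k+1}$ sends $\Delta$ to a simplex and $\operatorname{aff}(T)$ to a point, so $\operatorname{aff}(T)\cap\Delta\ne\emptyset$ if and only if all $d-k+2$ sidedness conditions hold, one per facet $\tau_j$ of $\Delta$: with $T=\{a_1,\dots,a_k\}$, the sign of the determinant of the $(d+1)\times(d+1)$ matrix with rows $(1,a_1),\dots,(1,a_k)$ and $(1,v)$ for $v\in\tau_j$ must match a prescribed pattern, up to a global sign absorbed by taking a union of two cases. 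Fixing $a_1,\dots,a_{k-1}$ and treating $a_k$ as the variable, each such determinant is affine in $a_k$, so each condition says that $a_k$ lies on a prescribed side of a hyperplane $h_j=\operatorname{aff}(a_1,\dots,a_{k-1},\tau_j)$. The aim is to turn each test into membership of $a_k$ in an interval of a fixed order of $S$ (one order per facet, with circularity handled by doubling as in the warm-up), so that counting the good $a_k$ becomes a box query in a $(d-k+2)$-dimensional orthogonal range counting structure: $O_d(n\log^{d-k+1}n)$ to build, $O_d(\log^{d-k+1}n)$ per query, $O(n^{k-1})$ queries, hence $O_{d,\varepsilon}(n^{k-1}\log^{d-k+2}n)$ overall. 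Uniform sampling follows the warm-up: pick $(a_1,\dots,a_{k-1})$ with probability proportional to its count, then $a_k$ uniformly in the box.

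The main obstacle is making this last point rigorous for $k\ge 3$. The hyperplane $h_j$ depends on the chosen $(a_1,\dots,a_{k-1})$, ranging over hyperplanes containing the $(d-k)$-flat $\operatorname{aff}(\tau_j)$; for $k=2$ these form a pencil parametrised by a circle, which is precisely what makes the fixed circular orders of the warm-up work, whereas for $k\ge 3$ the parameter space $\mathbb{RP}^{k-1}$ carries no total or cyclic order turning the relevant halfspaces into intervals, so there is no obvious single precomputable order. Overcoming this is what needs genuine work: a natural attempt is to first quotient $\mathbb{R}^d$ by the direction of $\operatorname{aff}(a_1,\dots,a_{k-1})$, under which $\operatorname{aff}(\tau_j)$ becomes a codimension-$2$ flat and the pencil of hyperplanes through it becomes a circle again, and then to verify that the induced circular orders of $S$ can be realised in a query-independent fashion so that one $(d-k+2)$-dimensional structure serves all $O(n^{k-1})$ queries. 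The remaining ingredients — the genericity/perturbation argument for $Q$ in the first stage, and the high-probability bound on the number of binary-search rounds — are routine adaptations of Section~\ref{sec:lines}.
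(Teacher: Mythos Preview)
Your outline is correct through the polyhedral approximation, the $(d-k)$-face shooting, the randomised binary search, and the determinantal sidedness characterisation; these match the paper's proof essentially verbatim. You have also correctly identified the real obstacle: once $a_1,\dots,a_{k-1}$ are all fixed, the hyperplanes $h_j=\operatorname{aff}(\{a_1,\dots,a_{k-1}\}\cup\tau_j)$ range over a $(k-1)$-parameter family as the tuple varies, so no single precomputed circular order on $S$ can serve all $O(n^{k-1})$ queries.

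The gap is in your proposed remedy. Quotienting by the direction of $\operatorname{aff}(a_1,\dots,a_{k-1})$ does collapse each $\operatorname{aff}(\tau_j)$ to codimension~$2$ and hence produces a pencil, but the quotient map itself depends on the full $(k-1)$-tuple, so the induced circular order on $S$ changes with every query; there is no way to realise these orders in a query-independent fashion with a single data structure, and your hope to ``verify'' this cannot succeed. The paper's resolution is simpler and avoids any quotient: fix only the first $k-2$ points $T=\{a_1,\dots,a_{k-2}\}$. Then for each facet $\tau_j$ of $\Delta$ (a $(d-k)$-simplex with $d-k+1$ vertices), the flat $\operatorname{aff}(\tau_j\cup T)$ is spanned by $d-1$ points and is therefore $(d-2)$-dimensional; hyperplanes through a fixed $(d-2)$-flat already form a one-parameter pencil in $\mathbb{R}^d$, giving a circular order on $S$ that depends only on $T$ and $\tau_j$. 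One now builds a $(d-k+2)$-dimensional orthogonal range structure per $(k-2)$-subset $T$, and for each remaining point $a_{k-1}$ issues a box query whose interval endpoints are the ranks of $a_{k-1}$ in those $d-k+2$ orders. This gives $O(n^{k-2})$ structures with $O(n)$ queries each, rather than one structure with $O(n^{k-1})$ queries, but the arithmetic is the same: $O(n^{k-1}\log^{d-k+1}n)$ per counting instance. The missing idea, in short, is to move one of the $k-1$ ``fixed'' points to the query side.
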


\begin{proof}
The algorithm is a generalization of the one in the previous section, in which the point $a$ is replaced by a set composed of $k-1$ points $a_1,a_2,\ldots,a_{k-1}$, and the edge $e$ is now a (simplicial) $(d-k)$-face of $Q$.
Given a $k-1$-tuple of points $a_1,a_2,\ldots ,a_{k-1}$, we characterize the locus of points $a_k$ such that the affine hull of the points $a_1,a_2,\ldots ,a_k$ intersects the convex hull of $y$ and $y+\lambda e$.
These hyperplanes are again such that counting all such points can be done using orthogonal range queries.
More precisely, we perform the following steps.

\paragraph*{$(1+\varepsilon)$-approximation and binary search.}
From Dudley's result, there exists a polyhedron of size
$O(1/\varepsilon^{(d-1)/2})$ such that the induced polyhedral distance
$d_Q(.,.)$ is a $(1+\varepsilon)$-approximation of the Euclidean distance.
We know that the distance $d_Q$ from the point $y$ to the nearest induced flat
is attained at a point lying on a $(d-k)$-face of $y+\lambda Q$.
We can therefore perform the same procedure as in the previous case, except
that we now shoot a $(d-k)$-face $e$ of $Q$, instead of an edge, in the same
way as is done in Agarwal, Rubin, Sharir~\cite{ARS17}.
$\Delta$ still denotes the convex hull of $y$ and $y+\lambda e$,
which generalizes to a $(d-k+1)$-simplex.
The binary search procedure generalizes easily: start with a large enough
\(\lambda\), if there is more than one flat \(\operatorname{aff}(\{\, a_1, a_2, \ldots,
a_{k}\,\})\) intersecting \(\Delta = \operatorname{conv}(\{\, y, y+ \lambda e\,\})\),
pick one such flat uniformly at random, and compute the value $\lambda$
such that this flat intersects \(\Delta\). There are only \(O(n^k)\) such flats
at the beginning of the search, hence a search takes \(O(\log n)\) steps with
high probability.
We can therefore reduce the problem to $O(\varepsilon^{(1-d)/2} \log n)$
instances of the following {\color{darkred} counting problem}: given a $(d-k+1)$-simplex
$\Delta$, count the number of $k$-tuples of points $a_1,a_2,\ldots ,a_k\in S$
whose affine hull $\operatorname{aff} (a_1,a_2,\ldots ,a_k)$ intersects $\Delta$.

\paragraph*{An intersection condition.}
We first make a simple observation that characterizes such $k$-tuples.
Let $A$ be a set of $k$ points $\{a_1,a_2,\ldots ,a_k\}$, and let $B
=\{b_1,b_2,\ldots ,b_{d-k+2}\}$ be the set of vertices of $\Delta$.
We assume without loss of generality that the points of $A$ together with the
vertices of $\Delta$ are in general position.
We define $d-k+2$ hyperplanes \(H_i = \operatorname{aff}(A \cup B \setminus \{\,b_i,
a_k\,\}), i \in [d-k+2]\).
We then let \(H_i^+\) be the halfspace supported by \(H_i\) that
contains \(b_i\), and $H_i^-$ the halfspace that does not contain $b_i$.

\begin{lemma}
\label{lem:range}
\[
\operatorname{aff}(A) \cap \Delta \neq \emptyset
\iff
a_k \in \left(
\left(\bigcap_{i=1}^{d-k+2} H_i^+\right)
\cup
\left(\bigcap_{i=1}^{d-k+2} H_i^-\right)
\right).
\]
\end{lemma}
\begin{proof}
($\Rightarrow$) Suppose that $a_k\not\in (\bigcap_i H_i^+) \cup (\bigcap_i H_i^-)$.
Hence there exists $i\in [d-k+2]$ such that $a_k\in H_i^-$, and $j\in [d-k+2]$ such that $a_k\in H_j^+$.
We show that $\operatorname{aff}(A)\cap\Delta =\emptyset$.
Let us consider the intersection of the two halfspaces $H^-_i$ and $H^+_j$ with the $(k-1)$-dimensional subspace $\operatorname{aff} (A)$.
In this subspace, both halfspaces have the points $a_1,a_2,\ldots ,a_{k-1}$ on their boundary, and contain $a_k$.
Hence it must be the case that $H^-_i\cap \operatorname{aff}(A) = H^+_j\cap \operatorname{aff}(A)$. Therefore, every point $p\in\operatorname{aff} (A)$ either lies
in $H^-_i$, or in $H^-_j$. In both cases, it is separated from $\Delta$ by a hyperplane, and $p\not\in\Delta$.

($\Leftarrow$) Suppose that $\operatorname{aff} (A)\cap\Delta = \emptyset$.
We now show that there exists $i\in [d-k+2]$ such that $a_k\in H_i^-$, and $j\in [d-k+2]$ such that $a_k\in H_j^+$.
Since both $\operatorname{aff}(A)$ and $\Delta$ are convex sets, if $\operatorname{aff} (A)\cap\Delta = \emptyset$ then there exists a hyperplane $H$ containing $\operatorname{aff} (A)$ and having $\Delta$ on one side.
Since the points of $A$ are affinely independent, it can then be rotated to contain all points of $A$ except $a_k$, and separate $a_k$ from $\Delta$.
After this, it has $d-(k-1)$ degrees of freedom left, and can be further rotated to contain a whole $(d-k)$-face of $\Delta$, while still separating $\Delta$ from $a_k$.
For some $i\in [d-k+2]$, this is now the hyperplane $H_i$ that separates some vertex $b_i$ from $a_k$, and $a_k\in H_i^-$.

Similarly, the same hyperplane $H$ can instead be rotated in order to contain all points of $A$ except $a_k$, and have $a_k$ and $\Delta$ this time on the same side.
It can then be further rotated to contain a $(d-k)$-face of $\Delta$, while still having $\Delta$ and $a_k$ on the same side.
Now for some $j\in [d-k+2]$, this is now the hyperplane $H_j$ that has $b_j$ and $a_k$ on the same side, and $a_k\in H_j^+$.
\end{proof}

Note that for the case $k=2$ and $d=3$ the set $(\bigcap_i H_i^+) \cup (\bigcap_i
H_i^-)$ is the double cone of apex $a$; the lower part $(\bigcap_i
H_i^+)$ is the cone $C_a$ in Figure~\ref{fig:cone}.
The case where $k=3$ and $d=3$ is illustrated on Figure~\ref{fig:condition}.

\begin{figure}
\begin{center}
\includegraphics[page=5,scale=.8]{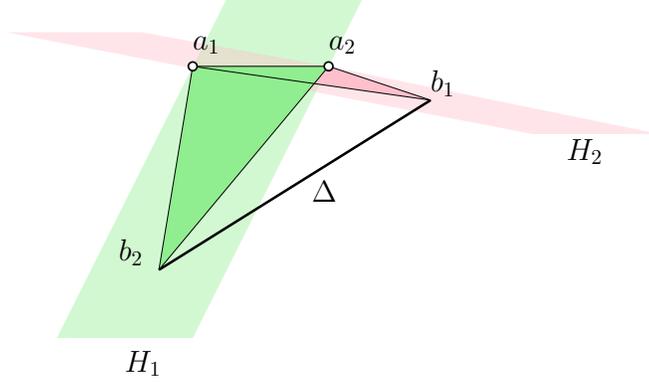}
\end{center}
\caption{\label{fig:condition}Illustration of Lemma~\ref{lem:range} in the case $k=d=3$. The plane through $a_1,a_2,a_3$ intersects the line segment $\Delta$ if and only if $a_3$ is located either above or below the two planes $H_1,H_2$.}
\end{figure}

\paragraph*{Reduction to orthogonal range queries.}

We now show that in perfect analogy with the previous section, we can solve the counting problem efficiently using an orthogonal range counting data structure.

Consider a vertex $b_i$ of $\Delta$ and a $(k-2)$-subset $T$ of points of $S$,
denoted by $T=\{ a_1,a_2,\ldots ,a_{k-2}\}$.
Let us denote by $f$ the facet of $\Delta$ that is induced by the vertices
$b_j$ such that $j \neq i$.
Now consider the hyperplane containing $f$ together with $T$, and one additional point $p$ of $S$.
These hyperplanes all contain $\operatorname{aff} (f \cup T)$, which is a $(d-2)$-flat.
Let us consider the unit normal vectors to these hyperplanes centered on some point contained in this $(d-2)$-flat.
These vectors lie in the orthogonal flat of dimension $d-(d-2)=2$, hence in a plane.
Therefore, they induce a circular order on the points of $S$.
Hence for a fixed set of $k-2$ points of $S$ and a fixed facet $f$ of $\Delta$,
we can assign a rank to each other point of $S$.
These will play the role of the coordinates of the points in the range counting data structure.

We now observe that counting the number of $k$-tuples whose affine hull intersects $\Delta$ amounts to orthogonal range counting with respect to these coordinates.
Indeed, fix the first $(k-2)$-subset of points $T=\{a_1,a_2,\ldots ,a_{k-2}\}$,
and compute the rank of each other point of $S$ with respect to the circular
order of the hyperplanes defined above, around each facet $f$ of $\Delta$.
Now consider a $(k-1)$th point $a_{k-1}$.
From Lemma~\ref{lem:range}, all points $a_k$ contained in the range $(\bigcap_i H_i^+) \cup (\bigcap_i H_i^-)$ are such that $\operatorname{aff} ({a_1,a_2,\ldots ,a_k})$ intersects $\Delta$.
But this range is the union of two $(d-k+2)$-rectangles in the space of coordinates that we defined.
The coordinates of these two $(d-k+2)$-rectangles are defined by the coordinates of $a_{k-1}$.
We can therefore set up a new orthogonal range counting data structure for each
$(k-2)$-subset $T$, and perform $2n$ queries in it,
two for each additional point $a_{k-1}\in S$.

We can now outline our algorithm for solving the counting problem:
\begin{enumerate}
\item \label{step:k-2} For each $(k-2)$-subset $T$ of points $a_1,a_2,\ldots
,a_{k-2}$ in ${S\choose k-2}$:
\begin{enumerate}
\item For each vertex $b_i$ of $\Delta$, compute the rank of each point of $S$
with respect to the hyperplanes containing $f=\operatorname{conv} (\{b_j:j\not= i\})$ and
$T$.
\item \label{step:DS} Build a $(d-k+2)$-dimensional range counting data
structure on $S$ using these ranks as coordinates.
\item For each other point $a_{k-1}\in S$:
\begin{enumerate}
\item \label{step:queries} Perform two range counting queries using the
rectangular ranges corresponding to $\bigcap_i H^+_i$ and $\bigcap_i H^-_i$,
respectively.
\end{enumerate}
\item Return the sum of the values returned by the range counting queries.
\end{enumerate}
\end{enumerate}

Note that there are a few additional technicalities which we have to take care
of.
First, the orders defined by the hyperplanes are circular, hence we are really
performing range queries on a torus.
This can be easily fixed, as mentioned previously, by doubling each point.
Then we have to make sure to avoid double counting, since any permutation of
the \(a_i\) in the enumeration of \(k\)-tuples yields the same set \(A\), and
hence, the same flat \(\operatorname{aff}(A)\).
(Note that in \S~\ref{sec:lines} we avoided double counting by observing
that only one of \(a \in C_b\) and \(b \in C_a\) can be true.)
This only affects the counting problem and is not problematic if we consider
\emph{ordered} subsets \(T\); it causes each intersecting flat to be
counted exactly \(k!\) times.\footnote{%
Enumerating each subset \(T\) exactly once as (\(k-2\))-tuples
in lexicographic order and only constructing the orthogonal range
searching data structure on the points of \(S\) that come after \(a_{k-2}\)
reduces this overcounting to \(2\) times per flat. In our case, this is
unnecessary since \(k\) is constant.
}
The termination condition for the binary search can be changed to when the
range count is \(k!\) and the sampling method for finding a uniform random
binary search pivot is unaffected since each candidate flat is represented an
equal number of times.

As for the running time analysis, step~\ref{step:DS} costs $O(n\log^{d-k+1}
n)$, while step~\ref{step:queries} costs $O(\log^{d-k+1} n)$ and is
repeated $n - k + 2$ times, hence costs $O(n\log^{d-k+1}n)$ overall as
well~\cite{Wil85,C88}.
These are multiplied by the number of iterations of the main loop, yielding
a complexity of $O(n^{k-1}\log^{d-k+1}n)$ for the counting procedure.

Finally, this counting procedure can be combined with the binary search
procedure provided we can choose a flat intersecting $\Delta$ uniformly at
random within that time bound.
This is achieved by first choosing a set prefix \(\{\, a_1, a_2, \ldots,
a_{k-1}\,\} \in {S \choose k-1}\) with probability proportional to the
number of points $a_k \in S$ such that
$\operatorname{aff}(\{\, a_1, a_2, \ldots, a_{k}\,\}) \cap \Delta \neq \emptyset$.
Then we can pick a point $a_k$ uniformly at random in this set in linear time.
Multiplying by the number of edge-shooting problems we have to solve, the
counting procedure is invoked $O(\varepsilon^{(1-d)/2} \log n)$ times, yielding
the announced running time.
\end{proof}

\section{An exact algorithm for the nearest induced hyperplane problem}
\label{sec:hyper}

In this section we consider the special case $k=d$, the nearest induced hyperplane problem (Problem~\ref{prob:nih}).
The previous result gives us a randomized (\(1+\varepsilon\))-approximation
algorithm running in time $O_{\varepsilon}(n^{d-1}\log^2 n)$ for this problem.
We describe a simple deterministic $O(n^{d-1+\delta})$-time exact algorithm
using only standard tools from computational geometry.

\begin{theorem}
\label{thm:hyper}
The nearest induced hyperplane problem can be solved in deterministic time $O(n^{d-1+\delta})$ for any $\delta>0$.
\end{theorem}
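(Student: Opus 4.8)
The plan is to pass to the dual, reduce the problem to locating one distinguished vertex of an arrangement of $n$ hyperplanes, and then find that vertex with a recursive cutting-based search.

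First I would translate the instance so that $y$ is the origin, and apply the standard non-vertical point–hyperplane duality: each point $p\in S$ becomes a hyperplane $p^*$, the affine hull of $d$ points of $S$ becomes a vertex of the arrangement $\mathcal{A}$ of the $n$ dual hyperplanes, and (since $y=0$) the dual $y^*$ of $y$ is the horizontal hyperplane $\{x_d=0\}$. A short computation shows that the Euclidean distance from $y$ to an induced hyperplane equals $\phi(v):=\lvert v_d\rvert/\sqrt{1+\lVert v'\rVert_2^2}$ for the corresponding dual vertex $v=(v',v_d)\in\mathbb{R}^{d-1}\times\mathbb{R}$; hence the task is to compute $\min_v \phi(v)$ over the vertices $v$ of $\mathcal{A}$. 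Two genericity issues are dispatched first. Induced hyperplanes that are vertical are not captured by this duality; a generic rotation of the coordinate frame makes them nonexistent (equivalently, one recurses on the $(d-1)$-dimensional problem obtained by projecting out $x_d$). Degenerate inputs — which are precisely the instances whose optimum is $0$ — are handled by a lexicographic perturbation that leaves the optimum unchanged.

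The structural fact I would exploit is that the sublevel sets $R_t:=\{u\in\mathbb{R}^d:\phi(u)\le t\}$ are nested, their boundaries $\partial R_t$ are fixed-degree algebraic surfaces, and the complement $\mathbb{R}^d\setminus R_t$ is the union of two convex regions with quadric boundaries (epigraph/hypograph of $\pm t\sqrt{1+\lVert u'\rVert_2^2}$). So we must find the smallest $t$ for which $R_t$ meets a vertex of $\mathcal{A}$; at $t=t^*$ the extremal vertex lies on $\partial R_{t^*}$ and $R_{t^*}^\circ$ contains no vertex of $\mathcal{A}$. I would then run the usual prune-and-search on cuttings: build a deterministic $(1/r)$-cutting (Chazelle) of the $n$ dual hyperplanes in $O(n r^{d-1})$ time, $r$ a large constant; recurse on each cutting cell $\sigma$ with the $\le n/r$ hyperplanes crossing it, but prune $\sigma$ as soon as the current information certifies it cannot contain an improving vertex, i.e. when $\sigma$ lies entirely on the far side ($\sigma\cap R_t=\emptyset$ for the current upper bound $t$) or entirely on the near side of the relevant shell; vertices of $\mathcal{A}$ that fall on a $j$-face of the cutting with $j<d$ are recovered by recursing on the $(d-1)$-dimensional subproblems inside the cutting's boundary hyperplanes, whose total cost is dominated. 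One maintains and propagates the best value found and, to turn the upper-bound test into a genuine bisection on $t^*$, wraps the procedure in Cole's deterministic parametric search (each comparison reduces to the sign of a constant-degree polynomial in $t$, resolved against the $O(1)$ roots by the same search). Since $\partial R_t$ has constant algebraic complexity, only $O(r^{d-1})$ cells of a $(1/r)$-cutting straddle it, so the recursion obeys $T(n)=O(r^{d-1})\,T(n/r)+O(n r^{d-1})$, which solves to $O(n^{d-1+\delta})$ once $r$ is chosen large enough in terms of $\delta$.

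The main obstacle is exactly this pruning/recursion analysis: one must show that at each level of the recursion only $O(r^{d-1})$ of the $O(r^d)$ cells are recursed into. Cells crossing $\partial R_t$ are few by the bounded-degree-surface cutting bound, but one also has to argue that cells lying deep inside $R_t$ — where the trumpet-shaped shell can enclose many cells — either are discarded immediately or contain only the (essentially unique) extremal vertex, rather than spawning full sub-recursions; this is where the convexity of the two complementary pieces of $\mathbb{R}^d\setminus R_t$ and the monotone shrinking of the maintained range for $t^*$ are essential. A secondary, more clerical obstacle is carrying out the lower-dimensional recursions for boundary-face vertices, the treatment of vertical induced hyperplanes, and the perturbation of degenerate inputs, all without inflating the bound past $O(n^{d-1+\delta})$ and without reintroducing randomness.
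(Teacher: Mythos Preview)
Your approach shares the starting point with the paper---pass to the dual arrangement $\mathcal{A}$ of $n$ hyperplanes and use a recursive cutting scheme---but then diverges in a way that leaves a genuine gap.

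The paper's key observation is that the optimal dual vertex must lie in the (refined) zone of the \emph{fixed, known} hyperplane $\bar{y}$: move $y$ rectilinearly toward its nearest induced hyperplane; in the dual this is a continuous motion of $\bar{y}$ that reaches no vertex until it reaches the optimum, so that vertex sits in a simplex of the bottom-vertex decomposition of $\mathcal{A}$ already crossed by the initial $\bar{y}$. By the Zone Theorem this set has $O(n^{d-1})$ vertices, and the recursion $T(n)=O(r^{d-1})\,T(n/r)+O(n)$ is then immediate because the cells to recurse into are exactly those meeting the fixed hyperplane $\bar{y}$. No parametric search, no maintained interval for $t^*$, no pruning based on global information, and no lower-dimensional boundary recursions are needed; one simply enumerates the zone vertices and tests each.

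Your scheme instead tries to recurse only into cells crossing the \emph{unknown} surface $\partial R_{t^*}$ and to discard cells contained in $R_{t^*}^\circ$ on the grounds that the interior holds no vertex of $\mathcal{A}$. But that last fact is equivalent to ``$t^*$ is the minimum of $\phi$ over the vertices'', which is exactly what you are computing; without it you cannot prune those cells. You appeal to parametric search, yet parametric search requires an independent decision oracle ``given $t$, does $R_t$ contain a vertex of $\mathcal{A}$?'' whose running time bounds the cost of every resolved comparison. You do not supply such an oracle, and for a generic $t$ the region $R_t$ may swallow $\Theta(r^d)$ cells of the cutting (and $\Theta(n^d)$ vertices of $\mathcal{A}$), so the natural decision procedure obeys $T(n)=O(r^{d})\,T(n/r)+O(n)$ and gives only $O(n^{d})$. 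The phrase ``resolved against the $O(1)$ roots by the same search'' is circular: the inner calls face the identical obstacle. You correctly flag this as ``the main obstacle'' but do not overcome it; the convexity of the two components of $\mathbb{R}^d\setminus R_t$ does not prevent $R_t$ from containing arbitrarily many vertices or cutting cells.

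The missing idea is precisely that one may replace the unknown shell $\partial R_{t^*}$ by the known hyperplane $\bar{y}=\{x_d=0\}$: the optimal vertex already lies in its zone. With that single observation the recursion you wrote down becomes valid with no parametric search, and the remaining ``clerical'' obstacles (boundary-face vertices, vertical hyperplanes, perturbation) disappear as well.
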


\begin{figure}
\begin{center}
\includegraphics[page=2, scale=.8]{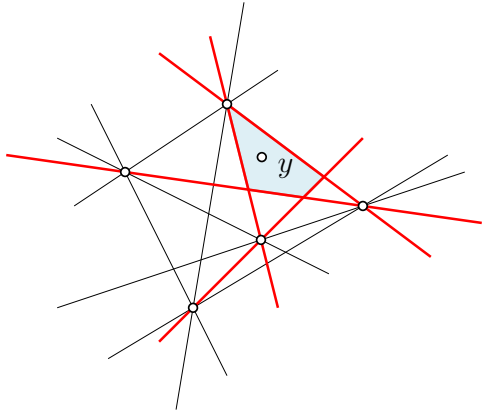}
\end{center}
\caption{\label{fig:zone}The candidate nearest hyperplanes.}
\end{figure}

The first tool we need is point-hyperplane duality.
Let $\bar{H}$ be the hyperplane arrangement that is dual to $S$, in which each point of $S$ is now a hyperplane.
Note that every vertex of this arrangement is the dual of a hyperplane induced by $d$ points of $S$.

Unfortunately, while some dualities preserve vertical distances, there does not
exist a duality that preserves euclidean distances.
To overcome this obstacle, we make a topological observation.
Recall that the {\color{darkred} zone} of a hyperplane $h$ in an arrangement $\bar{H}$ (not
including $h$) is the union of the $d$-cells of $\bar{H}$ intersected by $h$.
Similarily, we define the \emph{refined zone} of a hyperplane $h$ in an arrangement $\bar{H}$ (not
including $h$) to be the union of the $d$-simplices of the bottom-vertex
decomposition of $\bar{H}$ intersected by $h$.

\begin{lemma}
\label{lem:refzone}
Let $\bar{H}$ be the hyperplane arrangement that is dual to $S$, and $\bar{y}$
the hyperplane dual to the point $y$.
The induced hyperplane that is nearest to $y$ corresponds to a vertex of the
{\color{darkred} refined zone} of $\bar{y}$ in the arrangement $\bar{H}$.
\end{lemma}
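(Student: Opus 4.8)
The plan is to argue in the dual, where the nearest induced hyperplane corresponds to a vertex $v^\ast$ of $\bar H$, and to show that $v^\ast$ must lie in a $d$-simplex of the bottom-vertex decomposition of $\bar H$ that is crossed by $\bar y$. The key geometric fact I would establish first is a correspondence between Euclidean distance in primal space and a \emph{monotone} quantity in dual space: although no duality preserves Euclidean distance, the hyperplane $h$ through $d$ points of $S$ that is nearest to $y$ is, among all induced hyperplanes, the one whose supporting halfspace relationship with $y$ is ``extremal'' in a sense that survives dualization. Concretely, I would move a translate of $h$ continuously toward $y$ (equivalently, slide $\bar y$'s dual vertex along a line in the dual) and observe that the first induced hyperplane encountered — the nearest one — is a vertex of the zone of $\bar y$, because the zone is exactly the set of cells a hyperplane passes through as it sweeps, and the nearest induced hyperplane is the ``boundary event'' of that sweep.

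Next I would refine ``zone'' to ``refined zone.'' Having placed $v^\ast$ on the boundary of the zone of $\bar y$, i.e.\ in the closure of some $d$-cell $C$ of $\bar H$ met by $\bar y$, I need to upgrade this to: $v^\ast$ is a vertex of some $d$-simplex $\sigma$ of the bottom-vertex decomposition of $\bar H$ with $\sigma \cap \bar y \neq \emptyset$. The bottom-vertex decomposition triangulates each cell $C$ by coning from its lowest vertex; recursively, each face is triangulated from its own bottom vertex. The point is that $\bar y$ enters $C$, and inside $C$ the triangulation is a cone complex, so $\bar y \cap C$ is covered by the simplices $\sigma$ of the decomposition that it meets. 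Every vertex of $\bar H$ lying on $\partial C$ that is ``visible'' from where $\bar y$ crosses $C$ — in particular $v^\ast$ — is a vertex of at least one such $\sigma$. I would make this precise by a dimension induction on faces: if $\bar y$ meets a face $F$ of the arrangement and $v^\ast$ is a vertex of $F$, then either $v^\ast$ is the bottom vertex of $F$ (and is then a vertex of every simplex in the cone over $F$, including those met by $\bar y$), or $v^\ast$ lies on a proper subface $F'$ of $F$ that is itself met by $\bar y$ or can be connected to the crossing point within a single cone-simplex, and one recurses.

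The main obstacle I anticipate is precisely this last combinatorial bookkeeping: ensuring that the particular vertex $v^\ast$ (not merely \emph{some} vertex of the zone) is captured by a simplex of the \emph{refined} zone, rather than being lost in a face of $C$ that the triangulation separates from $\bar y$'s crossing locus. The resolution I would pursue is to exploit the canonical, coordinate-based nature of the bottom-vertex decomposition: the bottom vertex of any face is uniquely determined, so the cone structure is globally consistent across shared faces, and a ``lowest-vertex walk'' from any point of $\bar y \cap C$ toward $v^\ast$ stays inside the union of refined-zone simplices. Once this structural lemma is in hand, the theorem follows by the standard bound that the refined zone of a hyperplane in an arrangement of $n$ hyperplanes in $\mathbb{R}^d$ has $O(n^{d-1+\delta})$ simplices (via the zone theorem plus the near-linear-per-cell cost of bottom-vertex decomposition, or directly via cuttings~\cite{CF90,C93}), so enumerating its vertices and evaluating the true Euclidean distance to $y$ at each takes $O(n^{d-1+\delta})$ time.
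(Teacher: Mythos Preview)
Your overall instinct—use a continuous motion in the dual and argue that $v^\ast$ is the first arrangement vertex encountered—matches the paper, but the execution diverges at the crucial step and leaves a real gap.

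The paper moves the \emph{point} $y$ rectilinearly toward the nearest induced hyperplane $h^\ast$; dually this is a continuous motion of the \emph{hyperplane} $\bar y$ that hits $v^\ast=\overline{h^\ast}$ before any other vertex of $\bar H$ (because in the primal $y$ reaches $h^\ast$ before crossing any other induced hyperplane). The observation you are missing is that this single fact already yields the \emph{refined} zone directly, with no upgrade step: a moving hyperplane can start or stop intersecting a $d$-simplex of the bottom-vertex decomposition only at an instant when it contains a vertex of that simplex, and every such vertex is a vertex of $\bar H$. Hence the set of BVD simplices met by $\bar y_t$ is constant for $t\in[0,1)$; just before $t=1$ this set visibly contains a simplex incident to $v^\ast$, so it does at $t=0$ as well.

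Your two-stage plan (place $v^\ast$ in the zone, then ``upgrade'' to the refined zone) inserts a step that is false in general and that the paper never needs. A vertex of a cell $C$ in the zone of $\bar y$ need \emph{not} be a vertex of any BVD simplex of $C$ crossed by $\bar y$: take a hexagonal cell triangulated as a fan from its bottom vertex, with $\bar y$ crossing only one ear—the far vertices are unreachable. Your proposed ``lowest-vertex walk'' does not stay inside the refined-zone simplices in such an example, so it cannot close the gap without invoking the special property of $v^\ast$ (first vertex hit), at which point you are back to the paper's one-line invariance argument. Two smaller issues: you describe moving a translate of $h$ (dually, sliding a \emph{point}), whereas what makes ``first vertex hit'' coincide with ``nearest induced hyperplane'' is moving $y$ (dually, the hyperplane $\bar y$) along the perpendicular; and the $O(n^{d-1+\delta})$ running-time discussion does not belong to this lemma—it is established separately via cuttings.
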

\begin{proof}
Consider the arrangement of all ${n\choose k}$ hyperplanes induced by subsets
of $k$ points in $S$. Then clearly, the induced hyperplane nearest to $y$ must be one
of the hyperplanes bounding the cell of this arrangement that contains $y$ (see
Figure~\ref{fig:zone} for an illustration with $d=2$).  Consider a rectilinear
motion of $y$ towards this nearest hyperplane. In the dual arrangement
$\bar{H}$, this corresponds to a continuous motion of the hyperplane $\bar{y}$
that at some point hits a vertex of the arrangement. Because it is the first
vertex that is hit, it must belong to a cell of the bottom vertex
decomposition of $\bar{H}$ that $\bar{y}$ intersects, hence to the refined zone
of $\bar{y}$.
\end{proof}
We refer to chapter 28 of the Handbook of Discrete and Computational
Geometry~\cite{book17} for background on hyperplane arrangements and their
decompositions.

The second tool is an upper bound on the complexity of a zone in an arrangement~\cite{ESS93}.
The \emph{complexity of a zone} is the sum of the complexities of its cells, and the
complexity of a cell is the number of faces of the cell (vertices, edges,
\dots).
The upper bound is as follows:

\begin{theorem}[Zone Theorem~\cite{ESS93}]
\label{thm:zone}
The complexity of a zone in an arrangement of $n$ hyperplanes in $\mathbb{R}^d$ is $O(n^{d-1})$.
\end{theorem}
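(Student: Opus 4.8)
The plan is to reproduce the inductive argument of Edelsbrunner, Seidel, and Sharir~\cite{ESS93}. Write $\mathcal{A}$ for the arrangement of the $n$ hyperplanes and $h$ for the hyperplane whose zone is to be bounded. First I would reduce, by a standard perturbation argument, to the case where $\mathcal{A}\cup\{h\}$ is \emph{simple} (in general position): perturbing the hyperplanes only refines cells, so a bound in the simple case implies the general one. Second, I would reduce to counting \emph{facets}: in a simple arrangement a cell has at most one facet on each hyperplane, and every lower-dimensional face of a zone cell lies on one of its facets, so it suffices to bound, separately for each face dimension, the number of pairs $(\varphi,c)$ with $c$ a cell met by $h$ and $\varphi$ a facet of $c$. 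I would then induct on $d$. The base case $d=1$ is immediate (the zone of a point among $n$ collinear points is the single interval containing it), and $d=2$ is the classical fact that the zone of a line in an arrangement of $n$ lines has $O(n)$ edges, proved incrementally by showing that inserting each line adds only a constant number of new zone edges.

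For the inductive step in dimension $d\ge 3$, fix a hyperplane $g\in\mathcal{A}$ and consider the $(d-1)$-dimensional arrangement $\mathcal{A}_g$ cut out on $g$ by the other $n-1$ hyperplanes, together with the $(d-2)$-flat $\hat h=h\cap g$, which acts as a hyperplane inside $g$. Each cell $\sigma$ of $\mathcal{A}_g$ is the common facet of exactly two cells of $\mathcal{A}$, one on each side of $g$, so, up to a factor of two, the facets of the zone lying on $g$ correspond to the cells $\sigma$ of $\mathcal{A}_g$ at least one of whose two parent cells is met by $h$. If $h$ itself meets $\sigma$, then $\sigma$ lies in the zone of $\hat h$ in $\mathcal{A}_g$, and the induction hypothesis bounds the number of such cells by $O(n^{d-2})$; summing over the $n$ choices of $g$ gives the desired $O(n^{d-1})$ for this ``good'' part, and rerunning the count for each lower face dimension does the rest of that case.

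The hard part --- and the reason the theorem is not a one-line induction, as well as what~\cite{ESS93} had to establish beyond the earlier, flawed proofs --- is the remaining possibility: a cell $c$ of $\mathcal{A}$ can be met by $h$ even though the facet $c\cap g$ is disjoint from $h$, since $h$ may enter and leave $c$ entirely through facets carried by hyperplanes other than $g$. Such a facet is invisible to the zone of $\hat h$ in $\mathcal{A}_g$, so the naive recursion undercounts. A first attempt would be to charge each such ``extra'' facet $\varphi$ of $c$ to a facet of the cross-section $h\cap c$, a $(d-1)$-polytope whose faces, aggregated over all zone cells, assemble into the $(d-1)$-dimensional arrangement $\mathcal{A}|_h$ of total complexity $O(n^{d-1})$; but a single cross-section facet can sit ``behind'' many extra facets, so this charging is not obviously $O(1)$-to-one. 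Making the amortization genuinely bounded --- carefully tracking how $h$ traverses each cell and avoiding double-charges across cells, so that the extra facets still contribute only $O(n^{d-2})$ per hyperplane --- is the technical crux, and is exactly the content of~\cite{ESS93}. Once it is in place, the good and extra parts together give $O(n^{d-1})$ facets; reinstantiating the argument for every face dimension then yields the stated $O(n^{d-1})$ bound on the complexity of the zone and closes the induction. Standard background on arrangements and their decompositions can be found in~\cite{book17}.
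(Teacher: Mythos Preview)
The paper does not give its own proof of this theorem: it is quoted as a known result from~\cite{ESS93} and invoked as a black box to bound the number of vertices of the refined zone. There is therefore nothing in the paper to compare your argument against. Your outline is a faithful high-level summary of the inductive strategy of~\cite{ESS93}, and you correctly identify that the delicate point --- controlling facets of zone cells that $h$ meets without crossing the facet itself --- is exactly what that reference had to establish; since you explicitly defer that charging step to~\cite{ESS93}, your sketch is best read as an informed pointer rather than a self-contained proof, which is precisely the status the theorem has in the paper as well.
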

In particular, this result gives an upper bound of \(O(n^{d-1})\) vertices for
a given zone. Since the complexity of a refined zone is not more than the
complexity of the corresponding zone, this bound also holds for the complexity
of a given refined zone.

The third tool is Chazelle's efficient construction of cuttings~\cite{C93}.
A \emph{cutting} of \(\mathbb{R}^d\) is a partition of \(\mathbb{R}^d\) into
disjoint regions. Given a set of hyperplanes \(H\) in \(\mathbb{R}^d\), a
\emph{\(\frac{1}{r}\)-cutting} for \(H\) is a cutting of \(\mathbb{R}^d\)
such that each region is intersected by no more than \(\frac{| H |}{r}\)
hyperplanes in \(H\).
In particular, we are interested in Chazelle's construction when \(r\) is
constant. In that case, only a single step of his construction is necessary and
yields regions that are the simplices of the bottom-vertex decomposition of
some subset of \(H\).
\begin{theorem}[{Chazelle~\cite[Theorem 3.3]{C93}}]
\label{thm:cutting}
Given a set \(H\) of \(n\) hyperplanes in \(\mathbb{R}^d\), for any real
constant parameter \(r > 1\), we can construct a \(\frac{1}{r}\)-cutting for
those hyperplanes consisting of the \(O(r^d)\) simplices of the bottom-vertex
decomposition of some subset of \(H\) in \(O(n)\) time.
\end{theorem}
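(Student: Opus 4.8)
The plan is to establish this by the random-sampling / $\varepsilon$-net paradigm, which is the modern route to Chazelle's bound. Write $\varrho = 1/r$, a constant. The construction is: draw a uniform random subset $R \subseteq H$ of size $m = \Theta(\varrho^{-1}\log\varrho^{-1})$ (a constant), compute the arrangement $\mathcal{A}(R)$ of these $m$ hyperplanes together with its bottom-vertex decomposition $\mathcal{T}(R)$ --- a collection of $O(m^d) = O(1)$ full-dimensional simplices --- and output $\mathcal{T}(R)$. Everything then reduces to proving that, with probability at least $1/2$, every simplex $\sigma \in \mathcal{T}(R)$ is crossed by at most $|H|/r$ hyperplanes of $H$, i.e.\ that $\mathcal{T}(R)$ is a $\tfrac1r$-cutting.

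The heart of the argument is a net bound. A $d$-dimensional simplex of the bottom-vertex decomposition of an arrangement is determined by a bounded number $b = b(d)$ of the hyperplanes spanning that arrangement (its bottom vertex uses $d$ of them, the bottom vertices of the recursively triangulated faces use $O(d)$ more each, and so on). I would therefore work with the range space on ground set $H$ whose ranges are, for every $b$-tuple $(g_1,\dots,g_b)$ of hyperplanes spanning a candidate simplex $\sigma(g_1,\dots,g_b)$, the set of hyperplanes of $H$ that cross $\sigma(g_1,\dots,g_b)$. Each such range is cut out by a constant number of linear inequalities in the coefficients of a hyperplane, so this set system has VC dimension bounded by a function of $d$ alone. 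By the $\varepsilon$-net theorem, a random sample $R$ of the stated size $m$ is then a $\varrho$-net with probability at least $1/2$. The decisive observation is: every $\sigma \in \mathcal{T}(R)$ lies in (the closure of) a single cell of $\mathcal{A}(R)$, so no hyperplane of $R$ crosses $\sigma$, meaning the range ``hyperplanes crossing $\sigma$'' misses $R$ entirely; if $R$ is a $\varrho$-net, that range has size less than $\varrho|H| = |H|/r$. That is exactly the cutting condition, for all simplices simultaneously.

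For the size bound, $\mathcal{T}(R)$ has $O(m^d)$ simplices, and with $m = \Theta(r\log r)$ this is $O(r^d\log^d r)$, off by a polylogarithmic factor from the claimed $O(r^d)$. Removing it is where Chazelle's construction does real work: one starts from a sample of size only $\Theta(r)$ --- which already yields $O(r^d)$ simplices but leaves a few ``heavy'' ones crossed by more than $|H|/r$ hyperplanes --- and recursively re-samples inside each heavy simplex, the geometric decay of the heavy fraction keeping the overall simplex count at $O(r^d)$; this is also where the deterministic $O(n)$-time execution is secured, by a standard derandomization of each sampling step (the method of conditional probabilities with a pessimistic estimator counting heavy simplices, or Matou\v{s}ek's deterministic $\varepsilon$-net construction). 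I expect exactly this part --- the tight, log-free $O(r^d)$ bound together with its deterministic $O(n)$-time realization --- to be the main obstacle; the randomized version is comparatively routine once the bounded VC dimension of the crossing range space is in hand.

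Finally, I would note that for the statement as invoked in this paper, $r$ is a fixed constant, so $m$, $|\mathcal{T}(R)|$, and the cost of building $\mathcal{A}(R)$ and $\mathcal{T}(R)$ are all $O(1)$; the single-sample version already produces $O(r^d) = O(1)$ simplices, and verifying a candidate cutting is a single $O(n)$ scan over $H$ that charges each hyperplane to the constantly many simplices it crosses. Hence even the plain randomized variant above already gives an $O(n)$-time algorithm, succeeding after $O(1)$ trials in expectation, and the derandomization needed to match the deterministic bound is routine in this constant-$r$ regime.
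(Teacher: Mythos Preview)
The paper does not prove this statement: it is quoted verbatim as Chazelle's Theorem~3.3 from~\cite{C93} and used as a black box in the proof of Lemma~\ref{lem:cutzone}. There is therefore no ``paper's own proof'' to compare against.

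That said, your sketch is a reasonable outline of the standard random-sampling route to cuttings, and you correctly isolate where the real difficulty lies (eliminating the $\log^d r$ factor and achieving a deterministic $O(n)$ bound). Your final paragraph is also well-judged: since the paper only ever invokes the theorem with $r$ a fixed constant, the crude single-sample version with $O(r^d\log^d r)=O(1)$ simplices and an $O(n)$ verification scan already suffices for every downstream use here, so the delicate log-removal and derandomization are not actually needed for the present application.
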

More details on cuttings can be found in chapters 40 and 44 of the Handbook~\cite{book17}.

\begin{lemma}
\label{lem:cutzone}
For any positive constant $\delta$, given a hyperplane \(h\) and an arrangement of hyperplanes \(\bar{H}\) in \(\mathbb{R}^d\),
the vertices of the refined zone of \(h\) in \(\bar{H}\) can be computed in time $O(n^{d-1+\delta})$.
\end{lemma}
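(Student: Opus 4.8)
The plan is to compute the refined zone by recursively applying Chazelle's cutting construction (Theorem~\ref{thm:cutting}) with a large \emph{constant} parameter $r$, fixed at the end as a function of $d$ and $\delta$, and to charge the running time against the Zone Theorem (Theorem~\ref{thm:zone}). Concretely, consider a routine $\mathrm{RefZone}(\sigma, H)$ where $\sigma$ is a $d$-simplex and $H\subseteq\bar H$ is the set of hyperplanes crossing the interior of $\sigma$; it returns the vertices of the refined zone of $h$ within $\sigma$. If $|H|$ is below a fixed constant, build the arrangement of $H$ clipped to $\sigma$, take its bottom-vertex decomposition, keep the simplices met by $h$, and output their vertices; this costs $O(1)$. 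Otherwise, apply Theorem~\ref{thm:cutting} to $H$ with parameter $r$ to obtain, in $O(|H|)$ time, a $\frac{1}{r}$-cutting consisting of $O(r^d)$ bottom-vertex-decomposition simplices of a subset of $H$; clip each to $\sigma$ and re-triangulate the pieces (a constant blow-up keeping everything simplicial). For every resulting sub-simplex $\tau$ crossed by $h$, recurse on $(\tau, H_\tau)$, where $H_\tau\subseteq H$ is the set of hyperplanes crossing $\tau$, with $|H_\tau|\le |H|/r$; return the union of all vertices found, duplicates removed. The top-level call is on a bounding simplex $\sigma_0$ containing all vertices of $\bar H$, with $H=\bar H$.

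\textbf{Correctness.} Collecting all leaf simplices of this recursion — crossed by $h$ or not — yields a subdivision of $\sigma_0$ into simplices whose interiors are each disjoint from every hyperplane of $\bar H$ (at the leaves we fully resolve the $O(1)$ crossing hyperplanes), hence a simplicial refinement of the arrangement $\bar H$. We use this in lieu of ``the'' bottom-vertex decomposition in Lemma~\ref{lem:refzone}: the proof of that lemma only uses that the decomposition refines $\bar H$, so the induced hyperplane nearest to $y$ still corresponds to a vertex of a simplex crossed by $\bar y$, and those are exactly the leaf simplices kept by the recursion (a simplex not crossed by $h$ at some level contributes no descendant crossed by $h$). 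Hence no candidate vertex is missed and the routine returns the vertices of the refined zone.

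\textbf{Running time.} The quantitative heart is that a single hyperplane $h$ crosses only $O(r^{d-1})$ of the $O(r^d)$ cells of a $\frac{1}{r}$-cutting from Theorem~\ref{thm:cutting}: those cells refine cells of the arrangement of the $O(r)$ hyperplanes defining the cutting, so the crossed ones lie in the zone of $h$ in that arrangement, of complexity $O(r^{d-1})$ by Theorem~\ref{thm:zone}. Writing $T(m)$ for the time spent on a simplex crossed by $m$ hyperplanes, $T(m)\le \alpha m + c\,r^{d-1}\,T(m/r)$ with $T(O(1))=O(1)$, where $\alpha$ may depend on $r$ and $d$ but $c$ depends only on $d$. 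Unrolling over the $\log_r m$ levels, level $i$ has at most $(c r^{d-1})^i$ nodes doing $O(m/r^i)$ work each, a total of $O\!\big(\alpha m\,(c r^{d-2})^i\big)$; this geometric progression is dominated by its last term, giving $T(n)=O\!\big(\alpha n\,(c r^{d-2})^{\log_r n}\big)=O\!\big(n^{\,d-1+\log_r c}\big)$. Choosing the constant $r\ge c^{1/\delta}$ makes $\log_r c\le\delta$, so $T(n)=O(n^{d-1+\delta})$; the number of vertices output is bounded the same way (and is in fact $O(n^{d-1})$ by Theorem~\ref{thm:zone}).

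\textbf{Main obstacle.} The genuinely delicate point is the identification in the correctness step: the hierarchical decomposition built by recursive cuttings is not literally the canonical bottom-vertex decomposition of $\bar H$, so one must verify it is a bona fide simplicial refinement to which the purely topological ``first-vertex-hit'' argument of Lemma~\ref{lem:refzone} applies — in particular that every vertex of $\bar H$ is a vertex of this decomposition, handling cases where an arrangement vertex or $h$ itself lies on the boundary of a cutting cell. Once that is secured, the recurrence is routine, with the constant $c$ from the Zone Theorem being precisely what forces the slack $n^\delta$ and the constant choice of $r$.
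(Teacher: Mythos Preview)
Your proposal is correct and follows essentially the same approach as the paper: recursively apply Chazelle's $\frac{1}{r}$-cuttings with a constant parameter $r$, recurse only on the cells crossed by $h$ (of which there are $O(r^{d-1})$ by the Zone Theorem), obtain the recurrence $T(n)\le O(r^{d-1})\,T(n/r)+O(n)$, and pick $r$ large enough to force the exponent below $d-1+\delta$. You are more explicit than the paper about the base case and about the fact that the hierarchical decomposition produced is not literally the canonical bottom-vertex decomposition of $\bar H$; the paper glosses over this, and your observation that the argument of Lemma~\ref{lem:refzone} only requires \emph{some} simplicial refinement of $\bar H$ with arrangement vertices is the right way to close that gap.
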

\begin{proof}
Using Theorem~\ref{thm:cutting} with some constant $r$, we
construct, in linear time, a $\frac 1r$-cutting of the arrangement consisting
of $O(r^d)$ simplicial cells whose vertices are vertices of \(\bar{H}\).
To find the vertices of the refined zone, we only need to look at those cells
that are intersected by $\bar{y}$. If such a cell is not intersected by any
hyperplane of \(\bar{H}\) then its vertices are part of the refined zone of
\(\bar{y}\). Otherwise, we recurse on the hyperplanes intersecting that cell.
From Theorem~\ref{thm:zone}, there are at most $O(r^{d-1})$ such cells.
The overall running time for the construction is therefore:
$$
T(n)\leq O\left(r^{d-1}\right) T\left(\frac{n}{r}\right) + O(n).
$$
For all constant \(\delta > 0\), we can choose a sufficiently large constant \(r\),
such that \(T(n) = O(n^{d-1+\delta})\), as claimed.
\end{proof}

\begin{proof}[Proof of Theorem~\ref{thm:hyper}]
From Lemma~\ref{lem:cutzone}, we find the vertices of the refined zone of
\(\bar{y}\) in the arrangement \(\bar{H}\) in time $O(n^{d-1+\delta})$.
Then we compute the distance from $y$ to each of the induced hyperplanes
corresponding to vertices of the refined zone in time $O(n^{d-1})$. From
Lemma~\ref{lem:refzone}, one of them must be the nearest.
\end{proof}

\section{A $(1+\varepsilon)$-approximation algorithm for the nearest induced simplex problem}
\label{sec:convex}

We now consider the nearest induced simplex problem (Problem~\ref{prob:nis}).
The algorithm described in \S\ref{sec:lines} for the case
$k=2$ and $d=3$ can be adapted to work for this problem.

As in \S\ref{sec:lines},
consider the computation of the nearest induced segment
under some polyhedral distance \(d_Q\) approximating the Euclidean distance.
The reduction from this computation to edge-shooting still works with some
minor tweak: if we shoot edges to find the nearest
induced segment under \(d_Q\), we may miss some of the segments. Fortunately,
the points of these missed segments that are nearest to our query point under
\(d_Q\) must be endpoints of those segments. We can take those into account by
comparing the nearest segment found by edge-shooting to the nearest neighbor,
found in linear time.
As before, edge-shooting is reduced to a counting problem.

Referring to the proof of
Theorem~\ref{thm:lines} and Figure~\ref{fig:cone},
the analogue of the counting problem in \S\ref{sec:lines} for the nearest induced
segment problem amounts to searching for the points $b$ lying in the
intersection of the cone $C_a$ with the halfspace bounded by $\operatorname{aff}(\Delta)$
that does not contain $a$.
In dimension $d$, the affine hull of $\Delta$ is a hyperplane, and we restrict $b$ to lie on one side of this hyperplane.

We therefore get a $(1+\varepsilon)$-approximation $O(n \log^d n)$-time
algorithm for the {\color{darkred} nearest induced segment} problem in any fixed dimension
\(d\).
This compares again favorably with the $(2+\varepsilon)$-approximation $O(n\log n)$-time algorithm proposed by Har-Peled et al.~\cite{HIM18}.\\

We generalize this to arbitrary values of $k$ and prove the following result.
\begin{theorem}
\label{thm:convex}
For any constant positive real $\varepsilon >0$ and constant positive integers $d$ and
$k$, there is a randomized $(1+\varepsilon)$-approximation algorithm for the
nearest induced simplex problem in $\mathbb{R}^d$ running in time
$O(n^{k-1} \log^d n)$ with high probability.
\end{theorem}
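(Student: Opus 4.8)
The plan is to mimic the structure of the proof of Theorem~\ref{thm:flats} as closely as possible, adding a single extra coordinate to the orthogonal range counting data structure to handle the convexity constraint, in the same spirit as the $d=3$, $k=2$ case discussed above where we intersected the cone $C_a$ with a halfspace. First I would invoke Dudley's Lemma~\ref{lem:polyd} to replace the Euclidean distance by a polyhedral distance $d_Q$ with $Q$ of constant size, triangulate $Q$, and reduce to a constant number of edge-shooting problems where now the object shot is a $(d-k)$-face $e$ of $Q$; the simplex $\Delta=\operatorname{conv}(\{\,y,y+\lambda e\,\})$ is a $(d-k+1)$-simplex exactly as before. As noted in the $k=2$ warm-up for induced segments, shooting faces of $Q$ can miss simplices whose closest point to $y$ under $d_Q$ is a lower-dimensional face; I would handle this by separately computing, for each of the $O(1)$ lower-dimensional faces of $\Delta$, the nearest induced simplex \emph{through} the corresponding vertices of $\Delta$ — which is just a lower-dimensional instance of the same problem — and taking the overall minimum. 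Since $k$ and $d$ are constant this only multiplies the running time by a constant.

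Next I would set up the binary search exactly as in Theorem~\ref{thm:flats}: starting from a large $\lambda$, repeatedly count the $k$-tuples $a_1,\dots,a_k\in S$ whose convex hull $\operatorname{conv}(\{\,a_1,\dots,a_k\,\})$ intersects $\Delta$, pick one uniformly at random if there is more than one, recompute $\lambda$ from it, and iterate; there are $O(n^k)$ candidate simplices initially, so $O(\log n)$ iterations suffice with high probability. The main new ingredient is the counting subroutine. Here I would combine Lemma~\ref{lem:range} with an additional condition capturing convexity. Lemma~\ref{lem:range} already characterizes when $\operatorname{aff}(A)\cap\Delta\neq\emptyset$ by membership of $a_k$ in a union of two $(d-k+2)$-rectangles in the rank-coordinates induced by the facets of $\Delta$. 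When $\operatorname{aff}(A)$ meets $\Delta$, it does so in a single point $p$ (generically); the extra requirement is that $p$ lies in $\operatorname{conv}(A)$ rather than merely in $\operatorname{aff}(A)$. I expect this to be expressible, after fixing the ordered $(k-2)$-prefix $T=\{\,a_1,\dots,a_{k-2}\,\}$ and the point $a_{k-1}$, as one more pair of halfspace conditions on $a_k$ of the same combinatorial type as the $H_i^\pm$: namely, the hyperplanes through $T\cup(\Delta\text{-data})\cup\{a_{k-1}\}$ that decide on which side of the ``missing'' facet of the simplex $\operatorname{conv}(A)$ the intersection point falls. Each such family of hyperplanes again has its normals confined to a low-dimensional flat and hence induces a circular order on $S$, giving one additional coordinate (or a bounded number of them) for the range data structure — this is precisely the extra $(k+1)$-st dimension that Har-Peled et al.~use, yielding the extra logarithmic factor.

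With that in hand, the algorithm is: for each ordered $(k-2)$-subset $T$, compute the $O(1)$ rank-orders of the remaining points (the $d-k+2$ facet-orders from Lemma~\ref{lem:range} plus the convexity-orders), build a $(d-k+3)$-dimensional — more generally $O(1)$ extra dimensions, but accounted as an extra $\log^{k-2}n$ in total, cf.\ the theorem statement — orthogonal range counting structure on $S$ in time $O(n\log^{d-k+2}n)$, and for each $a_{k-1}\in S$ perform a bounded number of rectangular range queries in time $O(\log^{d-k+2}n)$; summing over the $O(n^{k-2})$ prefixes $T$ and the $O(n)$ choices of $a_{k-1}$, and accounting correctly for the constant overcounting of each simplex, the counting subroutine runs in $O(n^{k-1}\log^{d-k+2}n\cdot\log^{k-2}n)=O(n^{k-1}\log^{d}n)$ time, and uniform random sampling of an intersecting simplex is done prefix-first then uniformly among the counted $a_k$, exactly as in Theorem~\ref{thm:flats}. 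Multiplying by the $O(\log n)$ binary-search steps and the $O_\varepsilon(1)$ faces of $Q$ gives the claimed bound. The main obstacle I anticipate is pinning down exactly which hyperplanes encode the convexity condition and verifying that their normals really do lie in a flat of bounded codimension, so that the ``interval in a circular order'' structure — and hence the reduction to \emph{axis-parallel} range counting — survives; once the analogue of Lemma~\ref{lem:range} for $\operatorname{conv}(A)\cap\Delta\neq\emptyset$ is established, the rest is a routine repackaging of the previous proof.
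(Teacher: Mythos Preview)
Your overall strategy matches the paper's: reduce via Dudley to a polyhedral distance, shoot $(d-k)$-faces, binary-search on $\lambda$, and solve a counting problem by orthogonal range queries with extra coordinates encoding the convexity constraint. Two concrete points need correction.

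First, the convexity condition is not captured by ``one more pair of halfspace conditions.'' The paper's Lemma~\ref{lem:convex} shows that
\[
\operatorname{conv}(A)\cap\Delta\neq\emptyset
\iff
a_k\in\Bigl(\bigcap_{i=1}^{d-k+2}H_i^+\Bigr)\cap\Bigl(\bigcap_{j=1}^{k-1}G_j^-\Bigr),
\]
where $G_j=\operatorname{aff}(A\cup B\setminus\{a_j,a_k\})$ for $j\in[k-1]$. Thus there are $k-1$ new halfspace constraints, not one. For fixed $T=\{a_1,\dots,a_{k-2}\}$, each family $G_j$ contains the $(d-2)$-flat $\operatorname{aff}(A\cup B\setminus\{a_j,a_{k-1},a_k\})$ and so induces a circular order on $S$; this gives $k-1$ extra rank-coordinates, for a total of $(d-k+2)+(k-1)=d+1$.

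Second, your running-time arithmetic overshoots by one logarithm. With $d+1$ coordinates the counting step costs $O(n^{k-1}\log^{d}n)$, and after the $O(\log n)$ binary-search rounds you would obtain $O(n^{k-1}\log^{d+1}n)$, not the claimed bound. The paper shaves the missing factor by observing that $G_{k-1}=\operatorname{aff}(B\cup T)$ is independent of $a_{k-1}$ (only the choice of side $G_{k-1}^-$ depends on $a_{k-1}$); one therefore builds two $d$-dimensional structures, one on the points above $G_{k-1}$ and one on those below, and queries the appropriate one. This brings the counting step down to $O(n^{k-1}\log^{d-1}n)$ and the total to $O(n^{k-1}\log^{d}n)$.
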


Again, we compute the nearest induced simplex under some polyhedral distance
\(d_Q\). As in the case \(k=2\),
(\(d-k\))-face-shooting can be adapted to take care of missed simplices:
for each \(2 \leq k' \leq k\),
shoot (\(d-k'\))-faces of \(Q\) to find the nearest (\(k'-1\))-simplex.
For \(k'=1\), find the nearest neighbor in linear time.
For any (\(k-1\))-simplex,
let \(0 \leq k' \leq k\) be the smallest natural number such that no (\(d-k'\))-face of
\(Q\) hits the simplex when shot from the query point. It is obvious that, for
all \(t < k'\), some (\(d-t\))-face of \(Q\) hits the simplex, and that, for all
\(t \geq k'\), no (\(d-t\))-face of \(Q\) hits the simplex.
For the sake of simplicity,
we hereafter focus on solving the face-shooting problem when \(k'=k\),
thus ignoring the fact a simplex can be missed.
Because the obtained running time will be of the order of
\(\tilde{O}(n^{k'-1})\), the running time will be dominated by this case.

In order to reduce face-shooting to range counting queries, we need an analogue of Lemma~\ref{lem:range} for convex combinations.
Let $A$ be a set of $k$ points $\{a_1,a_2,\ldots ,a_k\}$, and let $\Delta$ be a $(d-k+1)$-simplex with vertices in $B=\{b_1,b_2,\ldots ,b_{d-k+2}\}$.
We suppose that these points are in general position.
We define the hyperplanes
$H_i = \operatorname{aff}(A \cup B \setminus \{\,b_i, a_k\,\})$, for $i \in [d-k+2]$, and
$G_j = \operatorname{aff}(A \cup B \setminus \{\,a_j, a_k\,\})$, for $j \in [k-1]$.
We let \(H_i^+\) be the halfspace supported by \(H_i\) that contains \(b_i\), and
\(G_j^-\) the halfspace supported by \(G_j\) that does not contain \(a_j\).

\begin{lemma}
\label{lem:convex}
\[
\operatorname{conv}(A) \cap \Delta \neq \emptyset
\iff
a_k \in \left(
\left(\bigcap_{i=1}^{d-k+2} H_i^+\right)
\cap
\left(\bigcap_{j=1}^{k-1} G_j^-\right)
\right).
\]
\end{lemma}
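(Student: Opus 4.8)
The plan is to prove Lemma~\ref{lem:convex} by mimicking the structure of the proof of Lemma~\ref{lem:range}, but now carefully tracking the extra halfspace constraints \(G_j^-\) that encode the \emph{convexity} (as opposed to affine) condition on both \(A\) and \(\Delta\). The key conceptual point is that \(\operatorname{conv}(A) \cap \Delta \neq \emptyset\) is equivalent to \(\operatorname{aff}(A) \cap \Delta \neq \emptyset\) \emph{together with} the intersection point lying in the convex hull of \(A\) rather than merely its affine hull. The hyperplanes \(H_i\) (one per vertex \(b_i\) of \(\Delta\)) will, as in Lemma~\ref{lem:range}, control whether \(\operatorname{aff}(A)\) stabs the simplex \(\Delta\), while the new hyperplanes \(G_j\) (one per point \(a_j\) with \(j \in [k-1]\)) will control whether that stabbing point, viewed inside \(\operatorname{aff}(A)\), lies on the correct side of each facet of the simplex \(\operatorname{conv}(A)\). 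Note that because we shoot \((d-k)\)-faces \(e\) of \(Q\) and \(\Delta = \operatorname{conv}(\{y, y+\lambda e\})\) is a \((d-k+1)\)-simplex with \(d-k+2\) vertices, the face-shooting geometry has already forced the relevant point of \(\Delta\) to be an interior-type point, so I do not need a separate boundary argument here (that was handled by the \(k'\)-sweep discussion preceding the lemma).

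\medskip\noindent\textbf{Key steps.} First I would set up coordinates inside the \((k-1)\)-flat \(\operatorname{aff}(A)\): in this flat, \(a_1, \ldots, a_k\) are affinely independent, so \(\operatorname{conv}(A)\) is a full-dimensional simplex whose facets are exactly \(\operatorname{aff}(A \setminus \{a_j\})\) for \(j \in [k]\). For \(j \in [k-1]\), the facet through \(A \setminus \{a_j\}\) is precisely \(G_j \cap \operatorname{aff}(A)\) — this is the defining observation making \(G_j\) well-defined as a hyperplane in \(\mathbb{R}^d\), since \(\operatorname{aff}(A \cup B \setminus \{a_j, a_k\})\) has dimension \((k-2) + (d-k+1) = d-1\) under general position. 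The \(G_j^-\) halfspace (not containing \(a_j\)) restricted to \(\operatorname{aff}(A)\) is the halfspace \emph{away} from vertex \(a_j\); intersecting over \(j \in [k-1]\) and intersecting with the last facet gives \(\operatorname{conv}(A)\), but the last facet \(\operatorname{aff}(A \setminus \{a_k\})\) will turn out to be automatically satisfied whenever the \(H_i^+\) conditions hold — this is the subtle asymmetry in \(a_k\)'s role that I'll need to verify.

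\medskip\noindent For the \((\Rightarrow)\) direction I would argue: if \(\operatorname{conv}(A) \cap \Delta \neq \emptyset\), pick a common point \(p\). Since \(p \in \operatorname{aff}(A) \cap \Delta\), Lemma~\ref{lem:range} gives \(a_k \in \bigcap_i H_i^+\) or \(a_k \in \bigcap_i H_i^-\); I'd need to rule out the \(H_i^-\) branch using the fact that \(p \in \Delta\) on the \emph{same} side as all \(b_i\), combined with \(p\) being a convex combination of \(A\) with \(a_k\) having positive coefficient near \(p\)'s location — essentially, the sign pattern forces the \(+\) branch. Then for each \(j \in [k-1]\), since \(p \in \operatorname{conv}(A)\) lies on the \(G_j\)-side away from \(a_j\) within \(\operatorname{aff}(A)\), and \(p \in \Delta\) lies in \(G_j\) itself by construction (as \(\Delta \subseteq \operatorname{aff}(B) \subseteq \operatorname{aff}(A \cup B \setminus \{a_j, a_k\}) = G_j\) — wait, this needs care: \(\Delta\)'s vertices are all in \(B\), and \(G_j\) contains \(B \setminus \{\text{nothing from }B\}\), so indeed \(B \subseteq G_j\) hence \(\Delta \subseteq G_j\)), I get that \(a_k\) must be on the same side as \(p\)'s "away-from-\(a_j\)" direction forces, i.e. \(a_k \in G_j^-\). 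For \((\Leftarrow)\), assuming \(a_k\) is in the stated intersection, the \(H_i^+\) part gives (via Lemma~\ref{lem:range}) a point \(q \in \operatorname{aff}(A) \cap \Delta\), and I'd show the \(G_j^-\) conditions force \(q\) into \(\operatorname{conv}(A)\): since all of \(\Delta \subseteq G_j\) and \(a_k \in G_j^-\), and \(q \in \Delta\), the point \(q\) is "pinned" to the facet hyperplane \(G_j \cap \operatorname{aff}(A)\) or its \(a_k\)-side, and the affine-independence of \(A\) lets me conclude the barycentric coordinates of \(q\) with respect to \(A\) are all nonnegative.

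\medskip\noindent\textbf{Main obstacle.} The delicate part is the asymmetric treatment of \(a_k\): the lemma uses only \(k-1\) hyperplanes \(G_j\), not \(k\), so I must show the missing facet constraint (nonnegativity of \(a_k\)'s own barycentric coordinate) is \emph{implied} by the combination of the \(H_i^+\) constraints and the fact that \(q \in \Delta\) with \(\Delta\) lying strictly on the \(b_i\)-side of each \(H_i\). Intuitively, the \(H_i^+\) conditions already say "\(a_k\) is on the \(\Delta\)-side", which is exactly what's needed for \(a_k\)'s barycentric coordinate at the intersection point to be nonnegative — but making this rigorous requires carefully relating the orientation of \(H_i\) (which involves \(a_k\) being \emph{omitted}) to barycentric signs, essentially a Cramer's-rule / signed-volume computation. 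I expect this sign-bookkeeping, together with verifying all the general-position dimension counts so that every \(H_i\) and \(G_j\) is genuinely a hyperplane, to be the crux; the rest parallels Lemma~\ref{lem:range} closely.
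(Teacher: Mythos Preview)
Your plan contains a concrete error that breaks the argument as written. You assert that for $j\in[k-1]$ the intersection $G_j\cap\operatorname{aff}(A)$ is the facet $\operatorname{aff}(A\setminus\{a_j\})$ of the simplex $\operatorname{conv}(A)$. This is false: that facet contains $a_k$, whereas $a_k\notin G_j$ under general position, since $G_j=\operatorname{aff}(A\cup B\setminus\{a_j,a_k\})$ is spanned by $d$ points in general position none of which is $a_k$. What $G_j\cap\operatorname{aff}(A)$ actually is, is the $(k-2)$-flat through $a_1,\ldots,\widehat{a}_j,\ldots,a_{k-1}$ that \emph{separates $a_j$ from $a_k$} inside $\operatorname{aff}(A)$. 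So the $G_j$ are not facet hyperplanes of $\operatorname{conv}(A)$, and the ``missing $a_k$-facet constraint'' you single out as the main obstacle is a misdiagnosis stemming from this misidentification. Your barycentric idea can in fact be repaired once this is corrected: writing the intersection point $q=\sum_i\lambda_i a_i$ and using that $q\in\Delta\subseteq G_j$ while $a_i\in G_j$ for $i\neq j,k$, $a_j\in G_j^+$, $a_k\in G_j^-$, one gets $\lambda_j$ and $\lambda_k$ of the same sign for every $j\in[k-1]$; since $\sum_i\lambda_i=1$ this forces all $\lambda_i\geq 0$, with no extra constraint needed. But this is not the argument you sketched.

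The paper proceeds differently. For $(\Leftarrow)$ it uses the general-position equivalence $\operatorname{conv}(A)\cap\Delta\neq\emptyset\iff(\operatorname{aff}(A)\cap\Delta\neq\emptyset)\wedge(\operatorname{conv}(A)\cap\operatorname{aff}(\Delta)\neq\emptyset)$, gets the first conjunct from Lemma~\ref{lem:range}, and for the second shows $\bigcap_j G_j=\operatorname{aff}(\Delta)$ and proves $\operatorname{conv}(A)\cap\bigcap_j G_j\neq\emptyset$ by induction on $k$ (replacing $a_{k-1}$ by the point where the segment $a_{k-1}a_k$ crosses $G_{k-1}$). For $(\Rightarrow)$ it argues the contrapositive by a direct separation: if $a_k\notin H_i^+$ (resp.\ $a_k\notin G_j^-$), then $H_i$ (resp.\ $G_j$) has all of $A$ on one closed side and all of $B$ on the other, so it separates $\operatorname{conv}(A)$ from $\Delta$; this avoids your vaguely described step of ``ruling out the $\bigcap_i H_i^-$ branch''.
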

\begin{proof}
$(\Leftarrow)$
Suppose that $a_k\in(\bigcap_i H^+_i)\cap(\bigcap_j G_j^-)$.
We have that $\operatorname{conv}(A)\cap\Delta\neq\emptyset$ if and only if both
$\operatorname{aff}(A)\cap\Delta\neq\emptyset$ and $\operatorname{conv}(A)\cap\operatorname{aff}(\Delta)\neq\emptyset$ hold.
From Lemma~\ref{lem:range}, we already have $\operatorname{aff}(A)\cap \Delta\not=\emptyset$.
It therefore remains to show that $\operatorname{conv}(A)\cap\operatorname{aff}(\Delta)\not=\emptyset$.

We first prove that $(\bigcap_j G_j)\cap\operatorname{conv} (A)\neq\emptyset$. We proceed by induction on $k$.
It can easily be shown to hold for $k=2$. Let us suppose it holds for $k-1$, and prove it for $k$.
The hyperplane $G_{k-1}$ separates $a_{k-1}$ from $a_k$.
Consider the point $a'_{k-1}$ of the segment between $a_{k-1}$ and $a_k$ that lies on $G_{k-1}$.
Let $A'= \{a_1,a_2,\ldots ,a_{k-2}, a'_{k-1}\}$.
Consider the intersection $G'_j$ of all hyperplanes $G_j$ for $j\in [k-2]$ with the subspace $\operatorname{aff}(A')$.
In the subspace $\operatorname{aff}(A')$, The hyperplanes $G'_j$ for $j\in [k-2]$ all separate $a_j$ from $a'_{k-1}$.
Hence we can apply induction on $A'$ and the hyperplanes $G'$ in dimension $k-2$, and we have that
$(\cap_{j\in [k-2]} G'_j)\cap\operatorname{conv} (A')\neq\emptyset$. Now because $a'_{k-1}\in\operatorname{conv} (\{a_{k-1},a_k\})$, we also have that
$(\cap_{j\in [k-1]} G_j)\cap\operatorname{conv} (A)\neq\emptyset$.

Now we also observe that $\bigcap_j G_j = \operatorname{aff} (\Delta)$.
The fact that \(\operatorname{aff} (\Delta) \subseteq \bigcap_j G_j\) is immediate since each
\(G_j\) contains \(\operatorname{aff}(\Delta)\).
To prove that \(\bigcap_j G_j\) cannot contain more than \(\operatorname{aff}(\Delta)\) it
suffices to show that those flats are of the same dimensions.
Since the set \(A \cup B\) is in general position, \(a_j\) (and \(a_k\)) cannot
lie on \(G_j\). Then we claim that the \(G_j\) are in general position. Indeed
if they are not, then there must be some \(1 \leq k' \leq k-1\) where \(\cap_{j \le k'-1} G_j =
\cap_{j \le k'} G_j\). However, this is not possible since \(a_{k'} \in \cap_{j \le
k'-1} G_j\) but \(a_{k'} \not\in \cap_{j \le k'} G_j\). The dimension of
\(\bigcap_j G_j\) is thus \(d-k+1\), the same as the dimension of \(\operatorname{aff}(\Delta)\).

Therefore, $\operatorname{conv}(A)\cap\operatorname{aff}(\Delta)\not=\emptyset$, as needed.\\

\noindent $(\Rightarrow)$
Suppose that $a_k\not\in(\bigcap_i H^+_i)\cap(\bigcap_j G_j^-)$.
Then one of the halfspace does not contain $a_k$.
It can be of the form $H^+_i$ or $G_j^-$.
In both cases, all points of $A$ are either contained in the hyperplane $H_i$ or $G_j$, or lie in $H^-_i$ or $G^-_j$.
Hence the hyperplane $H_i$ or $G_j$ separates the interiors of the convex hulls.
From the general position assumption, it also separates the convex hulls.
\end{proof}

The Lemma is illustrated on Figures~\ref{fig:convex} and~\ref{fig:convex2} in the cases $d=3$, $k=2$, and $d=k=3$.
\begin{figure}
\begin{center}
\includegraphics[page=3,scale=.8]{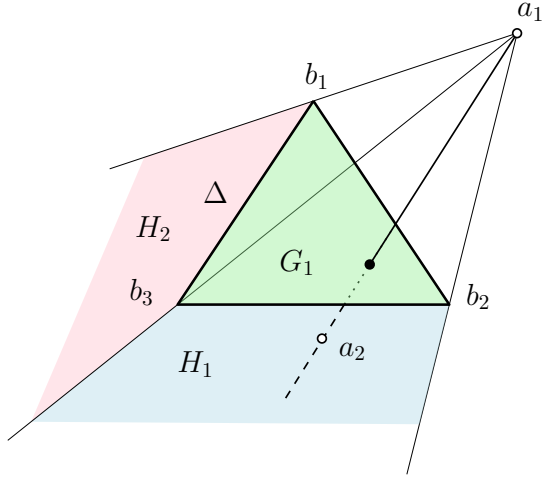}
\end{center}
\caption{\label{fig:convex}Illustration of Lemma~\ref{lem:convex} in the case $d=3$ and $k=2$. The segment $a_1a_2$ intersects $\Delta$ if and only if $a_2$ is located in the colored region below $\Delta$.}
\end{figure}

\begin{figure}
\begin{center}
\includegraphics[page=4,scale=.8]{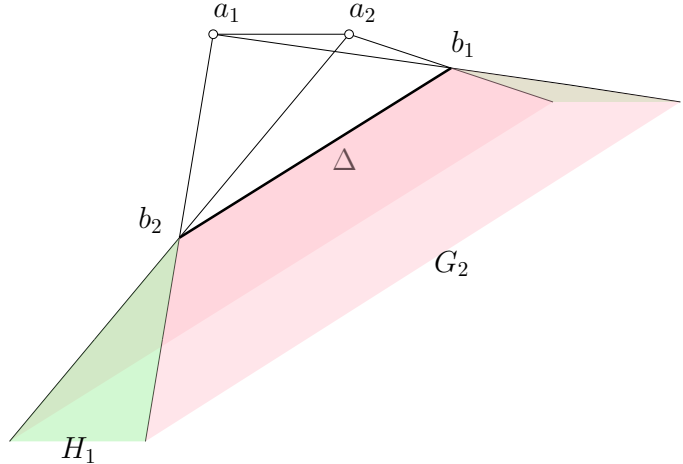}
\end{center}
\caption{\label{fig:convex2}Illustration of Lemma~\ref{lem:convex} in the case $k=d=3$. The triangle $a_1a_2a_3$ intersects $\Delta$ if and only if $a_3$ is located in the colored region.}
\end{figure}

\begin{proof}[Proof of Theorem~\ref{thm:convex}]
The algorithm follows the same steps as the algorithm described in the proof of Theorem~\ref{thm:flats}, except that the ranges used in the orthogonal range counting data structure are different, and involve a higher-dimensional space.

We reduce the problem to that of counting the number of $k$-subsets $A$ of $S$ whose convex hull intersects a given $(d-k+1)$-simplex $\Delta$.
We already argued that when fixing the first $k-2$ points $a_1,a_2,\ldots ,a_{k-2}$, the hyperplanes $H_i$ induce a circular order on the points of $S$.
Similarly, when the points $a_1,a_2,\ldots ,a_{k-2}$ are fixed, the hyperplanes $G_j$ all contain the $(d-2)$-flat $\operatorname{aff} (A\cup B\setminus \{a_j,a_{k-1},a_k\})$, hence also induce a circular order on the points of $S$.
Thus for each $(k-2)$-subset of $S$, we can assign $(d-k+2)+(k-1)=d+1$
coordinates to each point of $S$, one for each family of hyperplanes.
We then build an orthogonal range query data structure using these coordinates.
For each point $a_{k-1}$, we query this data structure and count the number of points $a_k$ such that $a_k\in (\bigcap_i H_i^+) \cap (\bigcap_j G_j^-)$.
From Lemma~\ref{lem:convex}, we can deduce the number of subsets $A$ whose convex hull intersects $\Delta$.

We can decrease by one the dimensionality of the ranges by realizing that the
supporting hyperplane of $G_{k-1}^-$ is unique as it does not depend on
$a_{k-1}$, only the orientation of $G_{k-1}^-$ does. To only output points
\(a_k\) such that \(a_k \in G_{k-1}^-\) we construct two data structures: one
with the points above \(G_{k-1}\) and one with the points below \(G_{k-1}\). We
query the relevant data structure depending whether \(a_{k-1}\) is above or
below \(G_{k-1}\).
This spares a logarithmic factor and yields an overall running time of
$O(n^{k-1}\log^{d-1}n)$ for the counting problem. Multiplying by the
\(O(\log n)\) rounds of binary search yields the claimed result.
\end{proof}

\section{Discussion}\label{sec:discussion}

We put the complexity of sparse regression on
par with that of degeneracy testing in the fixed-dimensional setting, and
proposed the first $o(n^k)$ algorithm for this problem.
A clear practical advantage of the
structure proposed by Har-Peled et al.~\cite{HIM18} is that one can reuse
approximate nearest neighbor data structures that circumvent the curse of
dimensionality, such as those based on locality-sensitive hashing~\cite{AI08}.
Our technique based on approximating the Euclidean distance by a polyhedral
distance, on the other hand, incurs an exponential dependence on $d$.

Some marginal improvements can probably be obtained by exploiting the
structure of the problem further.
For instance, we really only need orthogonal range {\color{darkred} emptiness queries} in
order to solve the decision version of our problem
Also, we do not harness the power of the Word RAM model for speeding
up the search~\cite{CLP11,CW16}.
For other relevant techniques, we refer to the excellent presentation by
Timothy Chan at WADS'13 on ``the art of shaving logs''\cite{Cha13}.

\subsection*{Acknowledgments}
The authors wish to thank the reviewers of a preliminary version of this manuscript,
who provided useful comments, as well as John Iacono and Stefan Langerman for
insightful discussions.\\

\bibliographystyle{abbrv}

\bibliography{paper}

\begin{thebibliography}{10}

\bibitem{ARS17}
P.~K. Agarwal, N.~Rubin, and M.~Sharir.
\newblock Approximate nearest neighbor search amid higher-dimensional flats.
\newblock In {\em 25th Annual European Symposium on Algorithms, {ESA} 2017,
  September 4-6, 2017, Vienna, Austria}, pages 4:1--4:13, 2017.

\bibitem{AEB06}
M.~{Aharon}, M.~{Elad}, and A.~{Bruckstein}.
\newblock {K}-{SVD}: An algorithm for designing overcomplete dictionaries for
  sparse representation.
\newblock {\em IEEE Transactions on Signal Processing}, 54(11):4311--4322, Nov
  2006.

\bibitem{AC05}
N.~Ailon and B.~Chazelle.
\newblock Lower bounds for linear degeneracy testing.
\newblock {\em J. {ACM}}, 52(2):157--171, 2005.

\bibitem{AI08}
A.~Andoni and P.~Indyk.
\newblock Near-optimal hashing algorithms for approximate nearest neighbor in
  high dimensions.
\newblock {\em Commun. {ACM}}, 51(1):117--122, 2008.

\bibitem{AMNSW98}
S.~Arya, D.~M. Mount, N.~S. Netanyahu, R.~Silverman, and A.~Y. Wu.
\newblock An optimal algorithm for approximate nearest neighbor searching fixed
  dimensions.
\newblock {\em J. {ACM}}, 45(6):891--923, 1998.

\bibitem{BCILOS19}
L.~Barba, J.~Cardinal, J.~Iacono, S.~Langerman, A.~Ooms, and N.~Solomon.
\newblock Subquadratic algorithms for algebraic {3SUM}.
\newblock {\em Discrete {\&} Computational Geometry}, 61(4):698--734, 2019.

\bibitem{BHZ11}
R.~Basri, T.~Hassner, and L.~Zelnik{-}Manor.
\newblock Approximate nearest subspace search.
\newblock {\em {IEEE} Trans. Pattern Anal. Mach. Intell.}, 33(2):266--278,
  2011.

\bibitem{B16}
D.~Bertsimas, A.~King, and R.~Mazumder.
\newblock Best subset selection via a modern optimization lens.
\newblock {\em Ann. Statist.}, 44(2):813--852, 04 2016.

\bibitem{BPV19}
D.~{Bertsimas}, J.~{Pauphilet}, and B.~{Van Parys}.
\newblock {Sparse Regression: Scalable algorithms and empirical performance}.
\newblock {\em arXiv e-prints}, Feb 2019.

\bibitem{CRT06}
E.~J. Cand{\`{e}}s, J.~K. Romberg, and T.~Tao.
\newblock Robust uncertainty principles: exact signal reconstruction from
  highly incomplete frequency information.
\newblock {\em {IEEE} Trans. Information Theory}, 52(2):489--509, 2006.

\bibitem{Cha13}
T.~M. Chan.
\newblock The art of shaving logs.
\newblock
  \url{https://web.archive.org/web/20191224144646/http://tmc.web.engr.illinois.edu/talks/wads13_talk.pdf},
  August 2013.

\bibitem{CLP11}
T.~M. Chan, K.~G. Larsen, and M.~Patrascu.
\newblock Orthogonal range searching on the {RAM}, revisited.
\newblock In {\em Proceedings of the 27th {ACM} Symposium on Computational
  Geometry, Paris, France, June 13-15, 2011}, pages 1--10, 2011.

\bibitem{CW16}
T.~M. Chan and B.~T. Wilkinson.
\newblock Adaptive and approximate orthogonal range counting.
\newblock {\em {ACM} Trans. Algorithms}, 12(4):45:1--45:15, 2016.

\bibitem{C88}
B.~Chazelle.
\newblock A functional approach to data structures and its use in
  multidimensional searching.
\newblock {\em {SIAM} J. Comput.}, 17(3):427--462, 1988.

\bibitem{C93}
B.~Chazelle.
\newblock Cutting hyperplanes for divide-and-conquer.
\newblock {\em Discrete {\&} Computational Geometry}, 9:145--158, 1993.

\bibitem{CF90}
B.~Chazelle and J.~Friedman.
\newblock A deterministic view of random sampling and its use in geometry.
\newblock {\em Combinatorica}, 10(3):229--249, 1990.

\bibitem{DMA97}
G.~Davis, S.~Mallat, and M.~Avellaneda.
\newblock Adaptive greedy approximations.
\newblock {\em Constructive Approximation}, 13(1):57--98, Mar 1997.

\bibitem{D06}
D.~L. Donoho.
\newblock Compressed sensing.
\newblock {\em {IEEE} Trans. Information Theory}, 52(4):1289--1306, 2006.

\bibitem{D74}
R.~M. Dudley.
\newblock Metric entropy of some classes of sets with differentiable
  boundaries.
\newblock {\em Journal of Approximation Theory}, 10(3):227 -- 236, 1974.

\bibitem{ESS93}
H.~Edelsbrunner, R.~Seidel, and M.~Sharir.
\newblock On the zone theorem for hyperplane arrangements.
\newblock {\em {SIAM} J. Comput.}, 22(2):418--429, 1993.

\bibitem{ES95}
J.~Erickson and R.~Seidel.
\newblock Better lower bounds on detecting affine and spherical degeneracies.
\newblock {\em Discrete {\&} Computational Geometry}, 13:41--57, 1995.

\bibitem{FKK16}
D.~P. Foster, S.~Kale, and H.~J. Karloff.
\newblock Online sparse linear regression.
\newblock In {\em Proceedings of the 29th Conference on Learning Theory, {COLT}
  2016, New York, USA, June 23-26, 2016}, pages 960--970, 2016.

\bibitem{FKT15}
D.~P. Foster, H.~J. Karloff, and J.~Thaler.
\newblock Variable selection is hard.
\newblock In {\em Proceedings of The 28th Conference on Learning Theory, {COLT}
  2015, Paris, France, July 3-6, 2015}, pages 696--709, 2015.

\bibitem{GO95}
A.~Gajentaan and M.~H. Overmars.
\newblock On a class of $o(n^2)$ problems in computational geometry.
\newblock {\em Comput. Geom.}, 5:165--185, 1995.

\bibitem{HIM18}
S.~Har{-}Peled, P.~Indyk, and S.~Mahabadi.
\newblock Approximate sparse linear regression.
\newblock In {\em 45th International Colloquium on Automata, Languages, and
  Programming, {ICALP} 2018, July 9-13, 2018, Prague, Czech Republic}, pages
  77:1--77:14, 2018.

\bibitem{HTF09}
T.~Hastie, R.~Tibshirani, and J.~H. Friedman.
\newblock {\em The elements of statistical learning: data mining, inference,
  and prediction, 2nd Edition}.
\newblock Springer series in statistics. Springer, 2009.

\bibitem{M07}
A.~Magen.
\newblock Dimensionality reductions in $\ell_2$ that preserve volumes and
  distance to affine spaces.
\newblock {\em Discrete {\&} Computational Geometry}, 38(1):139--153, 2007.

\bibitem{M15}
S.~Mahabadi.
\newblock Approximate nearest line search in high dimensions.
\newblock In {\em Proceedings of the Twenty-Sixth Annual {ACM-SIAM} Symposium
  on Discrete Algorithms, {SODA} 2015, San Diego, CA, USA, January 4-6, 2015},
  pages 337--354, 2015.

\bibitem{MZ93}
S.~Mallat and Z.~Zhang.
\newblock Matching pursuits with time-frequency dictionaries.
\newblock {\em {IEEE} Trans. Signal Processing}, 41(12):3397--3415, 1993.

\bibitem{M02}
A.~Miller.
\newblock {\em Subset Selection in Regression}.
\newblock Chapman and Hall/CRC, 2002.

\bibitem{N95}
B.~K. Natarajan.
\newblock Sparse approximate solutions to linear systems.
\newblock {\em {SIAM} J. Comput.}, 24(2):227--234, 1995.

\bibitem{T96}
R.~Tibshirani.
\newblock Regression shrinkage and selection via the lasso.
\newblock {\em Journal of the Royal Statistical Society. Series B
  (Methodological)}, 58(1):267--288, 1996.

\bibitem{book17}
C.~D. T\'oth, J.~O'Rourke, and J.~E. Goodman, editors.
\newblock {\em Handbook of Discrete and Computational Geometry}.
\newblock Chapman and Hall/CRC, 3rd edition, 2017.

\bibitem{Wil85}
D.~E. Willard.
\newblock New data structures for orthogonal range queries.
\newblock {\em {SIAM} J. Comput.}, 14(1):232--253, 1985.

\end{thebibliography}

\end{document}